\documentclass{llncs}

\usepackage{amsmath,amssymb,amsfonts,braket}
\usepackage{mathtools}
\usepackage{graphicx}

\usepackage{tikz}
\usetikzlibrary{decorations.pathreplacing, positioning, backgrounds, fit, calc}

\usepackage{pgfplots}
\pgfplotsset{compat=newest}

\usepackage{quantikz}

\usepackage{subcaption}
\usepackage{booktabs}
\usepackage[ruled,vlined,linesnumbered,longend]{algorithm2e}
\usepackage{placeins}
\DontPrintSemicolon
\SetKwInput{KwIn}{Input}
\SetKwInput{KwOut}{Output}
\usepackage{textcomp}
\usepackage{xcolor}
\usepackage[utf8]{inputenc}
\usepackage[T1]{fontenc}
\usepackage[bookmarks=false]{hyperref}
\usepackage{orcidlink}
\usepackage{cleveref}
\usepackage{todonotes}

\spnewtheorem{assumption}{Assumption}{\bfseries}{\itshape}
\spnewtheorem*{informaltheorem}{Theorem (informal)}{\bfseries}{\itshape}

\newcommand{\Z}{\mathbb{Z}}
\newcommand{\R}{\mathbb{R}}
\newcommand{\N}{\mathbb{N}}
\newcommand{\Zq}{\mathbb{Z}_q}

\newcommand{\vecb}[1]{\mathbf{#1}}     
\newcommand{\mat}[1]{\mathbf{#1}}      
\newcommand{\norm}[1]{\left\|#1\right\|}
\newcommand{\inner}[2]{\langle #1,\, #2 \rangle}

\renewcommand{\ket}[1]{|#1\rangle}
\renewcommand{\bra}[1]{\langle #1|}

\renewcommand{\orcidID}[1]{\,\orcidlink{#1}}

\title{Improved Dual Attack and Trapdoor Sampling via Quantum Rejection Sampling}

\author{Cong Ling\orcidID{0000-0001-7873-4862} \and Hao Yan\orcidID{0009-0007-4464-2450} \and Nicholas Zhao\orcidID{0009-0002-0281-2435}}

\institute{Imperial College London, United Kingdom\\
\email{\{c.ling,h.yan22,n.zhao22\}@imperial.ac.uk}}

\begin{document}

\maketitle
\pagestyle{plain}
\thispagestyle{plain}

\begin{abstract}
In this work, we revisit the dual attack and GPV trapdoor sampling, focusing on the lattice Gaussian sampling term, which can be a significant bottleneck in the overall complexity. We show that this sampling step can be quantumly accelerated by combining the lower bound underlying Wang and Ling's analysis of Klein's algorithm with the quantum rejection sampling (QRS) framework proposed by Ozols et al. Specifically, this lower bound gives precisely the pointwise domination condition required for quantum rejection sampling when given coherent oracle access to a truncated Klein proposal distribution, which yields a quantum procedure for preparing the truncated dual $q$-ary lattice Gaussian with a quadratic reduction in the sampling complexity. The truncation radius is chosen so that the truncated distribution is negligibly close to the full lattice Gaussian in total variation distance. Substituting this sampler into the dual attack framework results in reduced overall attack-cost estimates. Compared with Pouly and Shen's modern dual attack under the same parameter choices, our estimates reduce the attack cost by \(9\), \(4\), and \(13\) bits for Kyber-512, Kyber-768, and Kyber-1024, respectively. We also report the corresponding estimates with modulus switching. Finally, by replacing the Markov chain Monte Carlo (MCMC) sampler with the QRS algorithm, we achieve a similar quadratic speedup in the GPV signing process.

\keywords{Lattice-based cryptography \and Dual attack \and
  Learning with Errors \and Lattice Gaussian sampling \and Quantum rejection sampling \and GPV sampling}
\end{abstract}

\section{Introduction}
\label{sec:intro}

The Learning with Errors (LWE) and Short Integer Solution problems~\cite{Regev2009,MicciancioRegev2007} are foundational hardness assumptions upon which modern lattice-based cryptography rests. Their worst-case to average-case reduction---where solving average-case LWE or SIS is at least as hard as approximating certain worst-case lattice problems---established them as cornerstones of post-quantum cryptography. This is reflected in the NIST standardization of lattice-based schemes such as ML-KEM, a key-encapsulation mechanism derived from Kyber whose security relies on the hardness of Module-LWE~\cite{NISTFIPS203}, and FALCON, a digital signature scheme whose security is associated to the SIS problem~\cite{FouqueEtAlFalcon2020}. LWE also underpins primitives such as the Fully Homomorphic Encryption (FHE)~\cite{BrakerskiVaikuntanathan2011}. Understanding the concrete cost of attacks on LWE is therefore critical for assessing the security of deployed schemes.

There are many different attack methods for solving LWE, including lattice, combinatorial, algebraic, exhaustive-search, and hybrid methods. Among lattice-based approaches, the two dominant methods are primal and dual attacks; we focus on the latter. At a high level, a dual attack uses short vectors from an appropriate dual $q$-ary lattice to distinguish LWE samples from uniform; combined with a guessing step over part of the secret, this yields an attack on the decision variant of LWE. Prior dual attacks~\cite{Albrecht2017,GuoJohansson2021,MATZOV2022} establish this distinguisher via statistical heuristics in which a cosine-sum over short dual vectors concentrates around a Gaussian-derived expectation for the correct guess, and concentrates around zero for incorrect ones; the validity of this argument has recently been placed under scrutiny~\cite{DucasPulles2023}. Recent work by Pouly and Shen~\cite{PoulyShen2024} moved away from relying on such heuristics and instead rooted the attack in rigorous geometric analysis. Their attack splits the LWE secret into a guessed part and a dual part, with the matrix \(\mat{A}\) split correspondingly into \(\mat{A}_{\mathrm{guess}}\) and \(\mat{A}_{\mathrm{dual}}\). A basis of the dual lattice \(\mathcal{L}_q^\perp(\mat{A}_{\mathrm{dual}})\) is first reduced using BKZ~\cite{Schnorr1987}, and a Gaussian sampler via the Markov-chain Monte Carlo (MCMC) algorithm~\cite{wang_lattice_2019} is then used to draw many short vectors from this reduced basis. For each candidate guess, these vectors are combined into a cosine-sum score that tracks how close the corresponding residual lies to \(\mathcal{L}_q(\mat{A}_{\mathrm{dual}})\). The correct guess yields a residual equal to the LWE error modulo the lattice, while any incorrect guess shifts this residual by a nonzero coset representative whose distance to the lattice is bounded below by the shortest nonzero vector of \(\mathcal{L}_q(\mat{A})\), separating correct from incorrect guesses whenever the error is short enough relative to this minimum. Crucially, the entire argument is geometric and requires no statistical assumptions on the short vectors, which is the central contribution distinguishing this work from prior dual attacks. The quantum variant of their dual attack based on \cite{AlbrechtShen2022} accelerates the search stage of the dual attack using quantum mean estimation to approximate the cosine score associated with a candidate guess, after which quantum maximum finding is used to search over the guessed secret coordinates, replacing the classical guessing term with a quadratically reduced one. However, the dominant Gaussian sampling cost remains untouched.

The lattice Gaussian sampling step relies on MCMC methods, following a line of work developed by Wang and Ling~\cite{WangLing2015,WangLing2018GeometricErgodicity,wang_lattice_2019}, in which Klein's algorithm~\cite{Klein2000}, a randomized nearest-plane algorithm, is used as the proposal distribution and MCMC drives convergence towards the target lattice Gaussian, with convergence rate governed by the spectral gap they derive analytically. Naturally, one might first consider the quantum-walk-based methods~\cite{Szegedy2004,MagniezNayakRolandSantha2011} to accelerate the Markov chains. In particular, Wocjan et al.~\cite{WocjanAbeyesinghe2008} give an end-to-end framework for quantum sampling based on quantum walks, which can be interpreted as a quantum simulated annealing procedure. However, applying this approach directly to the Markov chain introduces additional complex structures: one must specify a cooling schedule, control the overlaps between successive target distributions, and obtain usable lower bounds on the spectral gaps of a sequence of Markov chains at different inverse temperatures, equivalently, different widths of lattice Gaussian distributions. These requirements make the resulting approach considerably more involved, and the overhead due to the requirements can offset the quadratic speedup one would hope to obtain. Our key observation is that one does not need to accelerate the chain itself. Instead, the relevant structure is already present in the Klein proposal distribution: it pointwise lower-bounds the target lattice Gaussian distribution up to a multiplicative constant. This is precisely the domination condition required for rejection sampling, whose classical form goes back to von Neumann~\cite{vonNeumann1951RandomDigits}. Under such a domination bound, the classical cost of producing one accepted target sample is proportional to the inverse of this constant; related domination-based ideas for lattice Gaussian proposals also appear in~\cite{BrakerskiLangloisPeikertRegevStehle2013}. Ozols et al.~\cite{OzolsRoettelerRoland2013} subsequently proposed a quantum analogue, quantum rejection sampling (QRS), which uses amplitude amplification to improve the dependence on this constant quadratically. This perspective has also been used in quantum state-generation problems, including the quantum Metropolis sampling algorithm~\cite{temme2011quantumMetropolis}.
This work targets the gap left open by these results: the cost of the Gaussian sampling step, which remains classical and now constitutes the dominant bottleneck of the attack. The urgency of this gap has only grown since~\cite{QuXu2025} further reduced the cost of the guessing stage, widening the disparity between the quantum-accelerated and classical components of the pipeline. We seek to close this gap by giving a quantum algorithm for the sampling step itself, thereby reducing the overall dual attack complexity.

\paragraph{\textbf{Contributions.}}
The main contribution of this work is a quantum algorithm for the discrete Gaussian sampling step of the dual attack framework in~\cite{PoulyShen2024}, replacing the classical MCMC sampler with a QRS procedure and yielding a quadratic reduction in the $\Delta$ term \eqref{eq:Delta}. First, we establish a lower bound on the truncated variant of Klein's probability distribution, a prerequisite for embedding it as the proposal oracle in QRS. We then instantiate QRS with this truncated Klein oracle to sample from a truncated lattice Gaussian whose total variation distance from the full lattice Gaussian is made negligible by an appropriate choice of truncation radius. Finally, we integrate this quantum sampler into the dual attack pipeline and tabulate the resulting end-to-end attack costs for Kyber-512, Kyber-768, and Kyber-1024, both without and with modulus switching; see Table~\ref{tab:qrs-vs-pouly-shen-no-ms} and Table~\ref{tab:qrs-vs-pouly-shen-ms}. This gives lower overall dual-attack cost estimates. We state our main result here informally:

\begin{theorem}[Informal]
\label{thm:qrs-informal}
Let \(\mathcal L=\mat B\Z^n\) and $\mathcal X_{\mathcal R}:=\{\vecb{x}\in\Z^n:\norm{\mat B\vecb{x}-\vecb c}\le \mathcal R\}$. Given access to the truncated Klein proposal oracle
\begin{equation}
    O_{Q_{\mathcal{R}}}:\ket{0}
\mapsto
\sum_{\vecb{x}\in \mathcal{X}_{\mathcal{R}}}
\sqrt{Q_{\mathcal{R}}(\vecb{x})}
\ket{\vecb{x}},
\end{equation}
its inverse $O_{Q_{\mathcal R}}^\dagger$, a rotation \(R_a\) operator. Given the Proposition~\ref{prop:klein-domination-truncated-target} gives the pointwise domination bound
\begin{equation}
Q_{\mathcal R}(\vecb{x})\ge p_{\mathcal R}\pi_{\mathcal R}(\vecb{x}),
\qquad
p_{\mathcal R}:=\frac{\Delta_{\mathcal R}}{Q(\mathcal X_{\mathcal R})}\in(0,1].
\end{equation}
Then QRS prepares the truncated lattice Gaussian state $\ket{\pi_{\mathcal R}}
=
\sum_{\vecb{x}\in\mathcal X_{\mathcal R}}
\sqrt{\pi_{\mathcal R}(\vecb{x})}\ket{\vecb{x}}$
using
\begin{equation}
\mathcal O\left(\frac{1}{\sqrt{\Delta_{\mathcal R}}}\right)
\end{equation}
oracle queries, where \(\Delta_{\mathcal R}
=
\dfrac{
\rho_{s,\vecb c}(\mat B\mathcal X_{\mathcal R})
}{
\prod_{i=1}^{n}\rho_{s_i}(\Z)
}\) is the pointwise ratio between the truncated Klein distribution \(Q_{\mathcal R}\) and the lattice Gaussian distribution \(\pi_{\mathcal R}\), as defined in Proposition~\ref{prop:klein-domination-truncated-target}. The total number of qubits is \(\left\lceil \log_2 |\mathcal X_{\mathcal R}| \right\rceil+1\).
\end{theorem}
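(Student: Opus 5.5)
We will prove the statement by assembling the standard quantum rejection sampling construction of Ozols, Roetteler and Roland around the domination bound of Proposition~\ref{prop:klein-domination-truncated-target}, and then bounding the number of amplitude-amplification rounds. The first step is to apply $O_{Q_{\mathcal R}}$ to the $\lceil\log_2|\mathcal X_{\mathcal R}|\rceil$-qubit register, which prepares $\sum_{\vecb x}\sqrt{Q_{\mathcal R}(\vecb x)}\ket{\vecb x}\ket{0}$. Next we apply the controlled rotation $R_a$ on one ancilla qubit. For each $\vecb x$ it should map
\begin{equation}
\ket{\vecb x}\ket{0}\mapsto \ket{\vecb x}\Bigl(\sqrt{1-a_{\vecb x}^2}\,\ket{0}+a_{\vecb x}\ket{1}\Bigr),
\qquad
a_{\vecb x}:=\sqrt{\frac{p_{\mathcal R}\,\pi_{\mathcal R}(\vecb x)}{Q_{\mathcal R}(\vecb x)}}.
\end{equation}
The domination bound $Q_{\mathcal R}(\vecb x)\ge p_{\mathcal R}\pi_{\mathcal R}(\vecb x)$ is exactly what makes $a_{\vecb x}\in[0,1]$, so this rotation is well defined. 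After it, the ancilla-$\ket{1}$ branch equals $\sqrt{p_{\mathcal R}}\sum_{\vecb x}\sqrt{\pi_{\mathcal R}(\vecb x)}\ket{\vecb x}\ket{1}=\sqrt{p_{\mathcal R}}\ket{\pi_{\mathcal R}}\ket{1}$. Hence the good subspace (ancilla $=1$) carries precisely the target state, with success amplitude $\sqrt{p_{\mathcal R}}$. This fixes the qubit count at $\lceil\log_2|\mathcal X_{\mathcal R}|\rceil+1$.

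We then call $\mathcal A:=R_a\,(O_{Q_{\mathcal R}}\otimes I)$ the state-preparation unitary and run amplitude amplification. The iterate is $\mathcal A S_0\mathcal A^\dagger S_1$, where $S_1$ is the reflection about ancilla $=1$ and $S_0$ is the reflection about $\ket{0}$. Each iterate uses one call to $O_{Q_{\mathcal R}}$ and one to $O_{Q_{\mathcal R}}^\dagger$. By the Brassard--H{\o}yer--Mosca--Tapp bound, $\Theta(1/\sqrt{p_{\mathcal R}})$ iterations suffice to rotate to the good state with constant (or, when $p_{\mathcal R}$ is known, exact) success. Because amplification acts within the two-dimensional span of the good and bad components, the output after measuring or postselecting the ancilla is exactly $\ket{\pi_{\mathcal R}}$. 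We will state the variant used explicitly: either exact amplification with a slightly modified final rotation, or fixed-point amplification if $p_{\mathcal R}$ is only lower-bounded.

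Finally we convert the bound to $\Delta_{\mathcal R}$. Since $p_{\mathcal R}=\Delta_{\mathcal R}/Q(\mathcal X_{\mathcal R})$ and $Q(\mathcal X_{\mathcal R})\le 1$, we get $p_{\mathcal R}\ge\Delta_{\mathcal R}$, so the query count is $\mathcal O(1/\sqrt{p_{\mathcal R}})\le\mathcal O(1/\sqrt{\Delta_{\mathcal R}})$.

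The main obstacle is not the amplification itself but justifying that $R_a$ is implementable as a query-free unitary. This requires computing $a_{\vecb x}$ coherently from $\vecb x$, which means computing $Q_{\mathcal R}(\vecb x)$ from the Klein factors $\rho_{s_i}$ and $\pi_{\mathcal R}(\vecb x)$ up to the shared normalisation. The key fact is that the ratio $\pi_{\mathcal R}/Q_{\mathcal R}$ has a closed form via Proposition~\ref{prop:klein-domination-truncated-target}. In particular, the unknown normaliser $\rho_{s,\vecb c}(\mat B\mathcal X_{\mathcal R})$ cancels against the definition of $p_{\mathcal R}$, so $a_{\vecb x}^2$ becomes the product of explicit one-dimensional Gaussian ratios. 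We will treat the arithmetic precision of this rotation as contributing only negligible error, absorbed into the same negligible budget as the truncation.
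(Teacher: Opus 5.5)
Your proposal is correct and follows essentially the same route as the paper: prepare $\ket{Q_{\mathcal R}}$, then apply the acceptance rotation, whose squared amplitude the paper likewise reduces to $\prod_{i}\rho_{s_i,\mu_i(\vecb{x}_{>i})}(\Z)/\rho_{s_i}(\Z)$ so that both normalisers cancel. You then identify the accepted branch as $\sqrt{p_{\mathcal R}}\ket{\pi_{\mathcal R}}\ket{1}$, amplify with $\mathcal O(1/\sqrt{p_{\mathcal R}})$ iterations, and conclude via $Q(\mathcal X_{\mathcal R})\le 1\Rightarrow p_{\mathcal R}\ge\Delta_{\mathcal R}$, exactly as the paper does; your remark that the good component stays proportional to $\ket{\pi_{\mathcal R}}\ket{1}$ throughout amplification, so postselection yields the target exactly even when $p_{\mathcal R}$ is only lower-bounded, is a slightly cleaner justification of exactness than the paper gives.
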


Measurement in the computational basis outputs a sample \(\mat B\vecb{x}\) from the truncated lattice Gaussian. We then use this result to speed up the modern dual attack framework in the following theorem.
\begin{theorem}[Informal]
Under the same parameter setting as Theorem~$6$ of \cite{PoulyShen2024}, with a set of lattice points lying inside the Euclidean ball of radius $\mathcal R
\ge
qs\left(
\sqrt{\frac{m}{2\pi}}
+
\sqrt{\frac{\log(N/\varepsilon_{\mathrm{tail}})}{\pi}}
\right)$, there exists an algorithm using QRS such that the end-to-end complexity for the dual attack is
\begin{equation}
\label{eq:dual-end-to-end-complexity_1}
\operatorname{poly}(m,n)(N+q^{n_{\mathrm{guess}}})
+
T_{\mathrm{BKZ}}(m,\beta)
+
N\mathcal O\left(\frac{1}{\sqrt{\Delta_{\mathcal{R}}}}\right).
\end{equation}
\end{theorem}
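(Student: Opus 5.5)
We follow the structure of Theorem~6 of \cite{PoulyShen2024} and change only the sampling step. Their complexity has three additive terms. The first is BKZ reduction of a basis of $\mathcal L_q^\perp(\mat A_{\mathrm{dual}})$, costing $T_{\mathrm{BKZ}}(m,\beta)$. The second is generation of $N$ dual vectors from a discrete Gaussian of width $qs$ over the dual $q$-ary lattice, costing $N$ times the MCMC mixing cost. The third is computing the cosine scores and searching over the $q^{n_{\mathrm{guess}}}$ guesses, costing $\operatorname{poly}(m,n)(N+q^{n_{\mathrm{guess}}})$.

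The first and third terms we inherit verbatim, provided the $N$ samples we hand to the distinguisher are distributed close enough in total variation to those assumed in \cite{PoulyShen2024}. For the second term, we invoke Theorem~\ref{thm:qrs-informal} with the reduced basis $\mat B$, center $\vecb c=\vecb 0$ and width $qs$. Each run prepares $\ket{\pi_{\mathcal R}}$ using $\mathcal O(1/\sqrt{\Delta_{\mathcal R}})$ oracle queries, and measuring gives one sample from $\pi_{\mathcal R}$. Repeating $N$ times independently gives the term $N\,\mathcal O(1/\sqrt{\Delta_{\mathcal R}})$. We take each oracle call (implementing Klein's algorithm coherently, truncated to $\mathcal X_{\mathcal R}$) to cost $\operatorname{poly}(m,n)$, which is absorbed into the $\mathcal O$ or into the first term.

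The main step is to justify the truncation radius. We apply the standard Banaszczyk-type tail bound for a discrete Gaussian of parameter $qs$ on an $m$-dimensional lattice:
\[
\Pr\bigl[\norm{\vecb y}>qs(\sqrt{m/(2\pi)}+t)\bigr]\le e^{-\pi t^2}
\]
(up to a smoothing factor). Setting $t=\sqrt{\log(N/\varepsilon_{\mathrm{tail}})/\pi}$ gives a mass outside the ball of at most $\varepsilon_{\mathrm{tail}}/N$. Hence the total variation distance satisfies $d_{TV}(\pi_{\mathcal R},\pi)\le\varepsilon_{\mathrm{tail}}/N$. By subadditivity over the $N$ independent samples, the joint distribution of the $N$ samples is within $\varepsilon_{\mathrm{tail}}$ of the ideal product distribution.

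The success probability of the Pouly--Shen distinguisher therefore drops by at most $\varepsilon_{\mathrm{tail}}$, and their correctness argument carries over. That argument is purely geometric, so it only needs the samples to be (close to) lattice Gaussian.

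The obstacle I expect is the tail bound. Making it hold for the actual dual $q$-ary lattice may require $qs$ to be above the smoothing parameter, or else a non-smoothed version of Banaszczyk's lemma. Failing both, I would bound the ratio $\rho_{qs}(\mathcal L\setminus \mathcal B_{\mathcal R})/\rho_{qs}(\mathcal L)$ directly via the standard $\rho_{s}(\mathcal L\setminus r\mathcal B)\le C^m\rho_s(\mathcal L)$ argument.

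A secondary point is that $\Delta_{\mathcal R}\le\Delta$ with near equality. This follows because $\rho_{s,\vecb c}(\mat B\mathcal X_{\mathcal R})$ captures all but an $\varepsilon_{\mathrm{tail}}/N$ fraction of the Gaussian mass, so the quadratic speedup over the MCMC $\Delta$ term is preserved.
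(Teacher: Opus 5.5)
Your proposal is correct and matches the paper's proof: the paper likewise applies the QRS theorem with $\vecb c=\vecb 0$ and width $qs$, uses Corollary~\ref{cor:tail-bound} to get $1-\alpha_{\mathcal R}\le\varepsilon_{\mathrm{tail}}/N$, shows $d_{\mathrm{TV}}=1-\alpha_{\mathcal R}$, takes a union bound over the $N$ samples, and transfers the Pouly--Shen success probability at a loss of $\varepsilon_{\mathrm{tail}}$. The smoothing obstacle you anticipate does not arise: that corollary is the non-smoothed Banaszczyk-type bound $\rho_{qs}(\mathcal L\setminus\mathcal B_m(\mathcal R))\le(\varepsilon_{\mathrm{tail}}/N)\,\rho_{qs}(\mathcal L)$, valid for any lattice, and with center $\vecb 0$ it is exactly the ratio you need.
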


Since \(\Delta_{\mathcal R}=\alpha_{\mathcal R}\Delta\), where \(\alpha_{\mathcal R}\) 
denotes the Gaussian mass captured by the ball, Corollary~\ref{cor:tail-bound} lets us choose \(\mathcal R\) so that \(\alpha_{\mathcal R}\rightarrow 1\). Here \(\Delta\)
denotes the main complexity term incurred by MCMC in \cite{PoulyShen2024} (cf. \eqref{eq:delta-PoulyShen2024}). Hence \(\Delta_{\mathcal R}\rightarrow \Delta\), and substituting this into \eqref{eq:dual-end-to-end-complexity_1} yields a lower overall dual attack cost via the quadratic improvement in the sampling complexity.

Beyond its application to attacks, we demonstrate a constructive use case for the proposed QRS algorithm: accelerating GPV trapdoor sampling. The foundational technique of the GPV signature scheme is discrete Gaussian sampling over a trapdoor lattice \cite{GenPeiVai2008}. To prevent private key leakage, its security critically depends on the output distribution of the discrete Gaussian sampling remaining completely oblivious to the specific basis used during the process. While the original GPV scheme relied on Klein's algorithm, any sufficiently secure Gaussian sampling algorithm can be adapted for GPV signatures. In particular, Wang and Ling \cite{wang_lattice_2019} proposed an MCMC sampler based on the independent MHK algorithm for GPV-style trapdoor sampling.

As shown in \cite{GenPeiVai2008}, the security of GPV signing reduces to the hardness of the Inhomogeneous Short Integer Solution (ISIS) problem with an approximation factor of $\sqrt{n/(2\pi)}\cdot s$. Consequently, the parameter $s$ is the defining characteristic of a discrete Gaussian sampler in this context. This introduces a tradeoff between security and computational efficiency when using MCMC~\cite{wang_lattice_2019} for trapdoor sampling: a smaller parameter $s$ enhances security but demands a longer running time. In particular, above smoothing, if
\(B_{\max}=\max_i\|\widehat{\vecb{b}}_i\|\) and
\(s\ge \sqrt{\gamma}B_{\max}\) for \(\gamma\ge1\), their bound~\cite[Eq. 77]{wang_lattice_2019} gives an inverse-gap at most
\[
    T_{\mathrm{MCMC}}\le   \vartheta_3(\gamma)^n(1+2\varepsilon),
\]
and hence mixing time, where \(d\) is the lattice dimension and \(\vartheta_3\) is the Jacobi theta
function.

Our QRS sampler can replace this MCMC Gaussian sampler. Under the same truncation argument as
in Theorem~\ref{thm:qrs}, the sampling factor is quadratically reduced, giving
the corresponding bound
\[
    T_{\mathrm{QRS}}
    \lesssim
    \vartheta_3(\gamma)^{n/2}\sqrt{1+2\varepsilon}.
\]
For the FALCON-512 parameters considered by
Wang and Ling, this changes the \(\gamma=2\) upper bound from
approximately \(45.49\) to \(6.74\).

\paragraph{\textbf{Related and concurrent work.}}
We became aware of concurrent and independent work by Chevignard et al.~\cite{cryptoeprint:2026/984}, who also use quantum rejection sampling for discrete Gaussian sampling over lattices. Their work gives an explicit coordinate-wise quantum implementation of a Klein-type proposal sampler over a finite box-truncated support, and then applies quantum rejection sampling to transform this proposal state into a state whose measurement distribution approximates the corresponding finite lattice Gaussian. They further develop two applications: a quantum speedup for the modern dual-attack framework and a quantum speedup for the Gaussian-sampling component of the SIS algorithm proposed by Bollauf et al.~\cite{bollauf2026solvingsisnormgaussian}. In particular, their full quantum dual-attack framework avoids QRACM by replacing the precomputed list oracle with a quantum state for a discrete Gaussian, combined with quantum mean estimation based on the previous work of~\cite{AlbrechtShen2022}. The QRACM-free variant naturally becomes feasible once one has a quantum algorithm to prepare a quantum state for a lattice Gaussian distribution, since this state can be used directly inside the quantum mean estimation subroutine instead of relying on QRACM to load a precomputed list.

Our work differs in scope and level of abstraction. We formulate the quantum rejection sampling step as an oracle and high-level circuit construction for performing lattice Gaussian sampling. Instead of giving a coordinate-by-coordinate implementation over the box-truncated region, we work with a spherical truncation. We then apply this construction to speed up the sampling subroutine of the modern dual attack framework of~\cite{PoulyShen2024}, including table comparisons between our attack-cost model and theirs under the same parameter constraints, both with and without modulus switching. Finally, we apply our quantum sampler to FALCON's trapdoor sampling, obtaining a quadratic speedup in signing, which we verify with numerical simulations.

\paragraph{\textbf{Paper organization.}}
\Cref{sec:prelim} gives the basic preliminaries on lattices, LWE, discrete Gaussians, Klein's algorithm, the MCMC time complexity, and the quantum primitives used in our quantum algorithm. In \Cref{sec:qrs}, we establish the pointwise domination bound needed for QRS and use it to obtain a quantum lattice Gaussian sampler with quadratically improved sampling complexity, up to negligible truncation error. In \Cref{sec:dual-complexity}, we recall the modern dual-attack framework of~\cite{PoulyShen2024} and incorporate our quantum sampler into the sampling component of their complexity model. We then compare the resulting dual-attack estimates with Table~1 of~\cite{PoulyShen2024}, using the same parameters, both with and without modulus switching. In \Cref{sec:falcon-trapdoor-sampling}, we apply the proposed QRS algorithm to accelerate trapdoor sampling. Finally, \Cref{sec:conclusion} concludes the paper.

\section{Preliminaries}
\label{sec:prelim}

We provide the following definitions and results we will use in our main theorems:

\subsection{Classical Preliminaries}
\label{ssec:lattice-classical-prelim}

\begin{definition}[Lattice]
\label{definition:lattice}
A lattice \(\mathcal{L}\subset \R^n\) is a discrete additive subgroup generated by the set of integer combinations of \(n\) linearly independent vectors \(\vecb{b}_1,\ldots,\vecb{b}_n\), where \(\mat{B}=[\vecb{b}_1,\ldots,\vecb{b}_n] \in \R^{n\times n}\):
\begin{equation}
\label{eq:lattice-definition}
\mathcal{L}(\mat{B})
=
\left\{
\sum_{i=1}^{n} x_i \vecb{b}_i : x_i \in \Z
\right\}.
\end{equation}
\end{definition}

\begin{definition}[Dual Lattice]
\label{def:dual-lattice}
Let \(\mathcal{L} \subseteq \R^n\) be a full-rank lattice. The dual lattice \(\mathcal{L}^*\) is defined as
\begin{equation}
\label{eq:dual-lattice-definition}
\mathcal{L}^*
=
\left\{
\vecb{y} \in \R^n :
\langle \vecb{y}, \vecb{x} \rangle \in \Z
\text{ for all }
\vecb{x} \in \mathcal{L}
\right\},
\end{equation}
where \(\langle \cdot,\cdot \rangle\) denotes the standard inner product in \(\R^n\).
\end{definition}

\begin{definition}[\(q\)-ary lattices]
\label{def:q-ary-lattice}
For a matrix \(\mat{A} \in \Zq^{m \times n}\), the \(q\)-ary lattices associated with \(\mat{A}\) are
\begin{equation}
\label{eq:q-ary-lattice}
\begin{aligned}
\mathcal{L}_q(\mat{A})
&=
\left\{
\vecb{z} \in \Z^m :
\exists \vecb{x}\in\Zq^n,\ \vecb{z} \equiv \mat{A}\vecb{x} \pmod q
\right\},\\
\mathcal{L}_q^{\perp}(\mat{A})
&=
\left\{
\vecb{z} \in \Z^m :
\mat{A}^{\top}\vecb{z} \equiv 0 \pmod q
\right\}.
\end{aligned}
\end{equation}
\end{definition}

\begin{remark}
Although \(\mathcal{L}_q^{\perp}(\mat{A})\) is standardly called the dual \(q\)-ary lattice, it is not identical to the Euclidean dual lattice \(\mathcal{L}_q(\mat{A})^*\) from Definition~\ref{def:dual-lattice}. Rather, the two are related by
\begin{equation}
\label{eq:q-ary-dual-relation}
\mathcal{L}_q(\mat{A})^*
=
\frac{1}{q}\mathcal{L}_q^{\perp}(\mat{A}).
\end{equation}
Thus \(\mathcal{L}_q^{\perp}(\mat{A})\) denotes the integer lattice of vectors orthogonal to the columns of \(\mat{A}\) modulo \(q\).
\end{remark}

\begin{definition}[Learning with Errors~\cite{Regev2009}]
\label{def:lwe}
Let \(m,n,q \in \N\), and let \(\chi\) be a distribution over \(\Zq\). The decisional \(\mathrm{LWE}_{n,q,\chi}\) problem with \(m\) samples is to distinguish
\((\mat{A},\mat{A}\vecb{s}+\vecb{e})\) from \((\mat{A},\vecb{u})\), where
\(\mat{A} \gets \Zq^{m \times n}\),
\(\vecb{s} \gets \Zq^n\),
\(\vecb{e} \gets \chi^m\), and
\(\vecb{u} \gets \Zq^m\).
\end{definition}

\begin{definition}[Discrete Gaussian Distribution]
For any \(\mathcal L \subset \R^n\), \(s > 0\), and \(\vecb{c}\in\R^n\), the discrete Gaussian distribution over \(\mathcal L\) is 
\begin{equation}
\mathcal D_{\mathcal L+\vecb{c},s}(\vecb{x}) = \frac{\rho_{s,\vecb{c}}(\vecb{x})}{\rho_{s,\vecb{c}}(\mathcal L)}
\end{equation}
for all \(\vecb{x} \in \mathcal L\), where \(\rho_{s,\vecb{c}}(\vecb{x})=\exp(-\pi\norm{\vecb{x}-\vecb{c}}^2/s^2)\) and \(\rho_{s,\vecb{c}}(\mathcal L)=\sum_{\vecb{y}\in\mathcal L}\rho_{s,\vecb{c}}(\vecb{y})\).
\end{definition}

See Figure \ref{fig:discrete-gaussian} for discrete Gaussians with different width $s$, where $s$ controls the sampling complexity. Most algorithms sample at or above the smoothing parameter \cite{MicciancioRegev2007}.

\begin{figure}[ht]
    \centering
    \begin{subfigure}[b]{0.48\textwidth}
        \centering
        \begin{tikzpicture}
            \begin{axis}[
                hide axis, 
                width=\linewidth,
                height=0.7\linewidth,
                view={60}{30},
                domain=-15:15, y domain=-15:15,
                samples=31, samples y=31,
            ]
                \addplot3[
                    only marks,
                    mark size=0.45pt,
                    scatter,
                    scatter src=z,
                ]
                ({x},{y},{exp(-pi*(x^2+y^2)/100)});
            \end{axis}
        \end{tikzpicture}
        \caption{$s=10$}
    \end{subfigure}
    \hfill
    \begin{subfigure}[b]{0.48\textwidth}
        \centering
        \begin{tikzpicture}
            \begin{axis}[
                hide axis, 
                width=\linewidth,
                height=0.7\linewidth,
                view={60}{30},
                domain=-15:15, y domain=-15:15,
                samples=31, samples y=31,
            ]
                \addplot3[
                    only marks,
                    mark size=0.45pt,
                    scatter,
                    scatter src=z,
                ]
                ({x},{y},{exp(-pi*(x^2+y^2)/16)});
            \end{axis}
        \end{tikzpicture}
        \caption{$s=4$}
    \end{subfigure}
    \caption{Discrete Gaussian mass on $\Z^2\cap[-15,15]^2$ with Gaussian parameter $s=10$ (left) and $s=4$ (right). The plotted values show the unnormalized mass underlying $\mathcal D_{\Z^2,s,\vecb{0}}(\vecb{x})=\rho_{s,\vecb{0}}(\vecb{x})/\rho_{s,\vecb{0}}(\Z^2)$.}
    \label{fig:discrete-gaussian}
\end{figure}

\begin{corollary}[\cite{StephensDavidowitz2017}]
\label{cor:tail-bound}
Let \(\mathcal{L}\subset\R^m\) be a lattice, let \(\vecb{c}\in\R^m\), let \(s>0\) and let \(\varepsilon_{\mathrm{tail}}\in(0,1)\). If
\begin{equation}
\mathcal{R}
\ge
s
\left(
\sqrt{\frac{m}{2\pi}}
+
\sqrt{
\frac{\log(1/\varepsilon_{\mathrm{tail}})}{\pi}
}
\right),
\end{equation}
then
\begin{equation}
\rho_s\left(
(\mathcal{L}-\vecb{c})\setminus \mathcal B_m(\mathcal{R})
\right)
\le
\varepsilon_{\mathrm{tail}}
\rho_s(\mathcal{L}).
\end{equation}
Here \(\mathcal B_m(\mathcal{R})\) denotes the radius-\(\mathcal{R}\) Euclidean ball in \(\R^m\).
\end{corollary}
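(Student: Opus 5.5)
The plan is to derive the corollary from the standard Banaszczyk-type tail bound in the form given by Stephens-Davidowitz. The key fact is a moment-generating estimate. For any lattice \(\mathcal L\subset\R^m\), any shift \(\vecb c\), and any \(t\ge 1\),
\begin{equation*}
\rho_s\bigl((\mathcal L-\vecb c)\setminus \mathcal B_m(t\, s\sqrt{m/(2\pi)})\bigr)\le \exp\!\Bigl(-\pi m\bigl(t-1/\sqrt{2\pi}\cdot\sqrt{2\pi}\bigr)^2/(2\pi)\Bigr)\rho_s(\mathcal L),
\end{equation*}
or, in the more convenient additive form, for \(r\ge 0\),
\begin{equation*}
\rho_s\bigl((\mathcal L-\vecb c)\setminus \mathcal B_m(s\sqrt{m/(2\pi)}+ s r)\bigr)\le e^{-\pi r^2}\rho_s(\mathcal L).
\end{equation*}
I would take this additive form, which is Stephens-Davidowitz's Lemma on discrete Gaussian tails, as the black box. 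After rescaling \(\mathcal L\) by \(1/s\), it reduces to \(s=1\).

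If I had to justify the black box, I would follow Banaszczyk's argument in three steps.

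First, bound the shifted mass by the centered mass, \(\rho_s(\mathcal L-\vecb c)\le\rho_s(\mathcal L)\). This follows from Poisson summation, since the Fourier transform of \(\rho_s\) is nonnegative.

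Second, use a scaling trick. For \(u\in(0,1)\), compare \(\rho_s\) with \(\rho_{s/u}\), which equals \(\rho_s^{u^2}\). On the region \(\norm{\vecb x}\ge\mathcal R\) one has \(\rho_s(\vecb x)\le e^{-\pi(1-u^2)\mathcal R^2/s^2}\rho_{s/u}(\vecb x)\). Summing over this region and bounding \(\rho_{s/u}(\mathcal L-\vecb c)\le u^{-m}\rho_s(\mathcal L)\), again via Poisson summation, gives a tail ratio of at most \(u^{-m}e^{-\pi(1-u^2)\mathcal R^2/s^2}\).

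Third, optimize over \(u\). Writing \(\mathcal R=s(\sqrt{m/(2\pi)}+r)\), the choice \(u^2=m/(2\pi\mathcal R^2/s^2)\) yields the bound \(\exp(-\pi(\mathcal R/s-\sqrt{m/(2\pi)})^2)\), up to the standard slack that gives exactly the stated form.

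With the black box in hand, the corollary is immediate. Set \(r=\sqrt{\log(1/\varepsilon_{\mathrm{tail}})/\pi}\), so that \(e^{-\pi r^2}=\varepsilon_{\mathrm{tail}}\). Any \(\mathcal R\) at least \(s(\sqrt{m/(2\pi)}+r)\) gives a ball containing the threshold ball. The tail set therefore only shrinks, and the tail mass is at most \(\varepsilon_{\mathrm{tail}}\rho_s(\mathcal L)\).

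The main obstacle is the optimization in the third step. A naive choice of \(u\) produces an extra factor such as \(e^{m/2}\) or a multiplicative constant, rather than the clean additive \(\sqrt{m/(2\pi)}\) shift. To handle this I would bound \(\log(u^{-m})+\pi(1-u^2)\mathcal R^2/s^2\) using \(\log(1/u)\le(1-u^2)/(2u^2)\) together with a completion of the square. If that does not close exactly, I would simply cite Stephens-Davidowitz's statement directly, as the paper does.
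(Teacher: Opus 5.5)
Your proposal is correct, and at the top level it matches the paper, which gives no argument and simply imports the bound from Stephens-Davidowitz. In fact your additive ``black box'' with \(r=\sqrt{\log(1/\varepsilon_{\mathrm{tail}})/\pi}\) is the corollary itself, so your final deduction is only a reparametrization.

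What you add is a self-contained route through Banaszczyk's argument, and the sketch is sound. Poisson summation gives \(\rho_{s/u}(\mathcal L-\vecb c)\le\rho_{s/u}(\mathcal L)\le u^{-m}\rho_s(\mathcal L)\). The pointwise comparison on \(\norm{\vecb x}\ge\mathcal R\) then gives the tail ratio \(u^{-m}e^{-\pi(1-u^2)\mathcal R^2/s^2}\). The citation buys brevity, while your route explains where the additive \(\sqrt{m/(2\pi)}\) comes from.

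The obstacle you anticipate in the third step does not arise. Write \(\varrho=\mathcal R/s=\sqrt{m/(2\pi)}\,(1+x)\), where \(x>0\) because \(\varepsilon_{\mathrm{tail}}<1\); this also ensures \(u<1\). With your choice \(u^2=m/(2\pi\varrho^2)\), the logarithm of the bound is exactly
\[
\tfrac m2\log\tfrac{2\pi\varrho^2}{m}+\tfrac m2-\pi\varrho^2
=m\log(1+x)-mx-\tfrac m2x^2
\le-\tfrac m2x^2
=-\pi\Bigl(\varrho-\sqrt{m/(2\pi)}\Bigr)^2 ,
\]
using only \(\log(1+x)\le x\). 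The \(e^{m/2}\) factor is absorbed. This is just Banaszczyk's bound \((\sqrt{2\pi e}\,t\,e^{-\pi t^2})^m\) with \(t=\varrho/\sqrt m\), rewritten additively.

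The inequality you propose instead, \(\log(1/u)\le(1-u^2)/(2u^2)\), does not work at that same \(u\). It bounds the exponent by \((1-u^2)\bigl(m/(2u^2)-\pi\varrho^2\bigr)\), which equals \(0\) at your stated \(u\) and so gives only the trivial bound. It works only after re-optimizing to \(u^2=\sqrt{m/(2\pi)}/\varrho\), where that expression equals \(-\pi(\varrho-\sqrt{m/(2\pi)})^2\).

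Either way, the hypothesis \(\varrho\ge\sqrt{m/(2\pi)}+\sqrt{\log(1/\varepsilon_{\mathrm{tail}})/\pi}\) makes the tail factor at most \(\varepsilon_{\mathrm{tail}}\). You therefore never need to fall back on the citation.
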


\begin{definition}[Klein Probability Distribution, \cite{Klein2000}]
\label{def:klein-probability}
Let \(\mathcal L=\mat{B}\Z^n\), where \(\mat{B}=\mat{Q}\mat{R}\) is the QR decomposition, and set \(\vecb{c}'=\mat{Q}^{\top}\vecb{c}\). For \(\vecb{x}\in\Z^n\), define
\begin{equation}
s_i:=\frac{s}{|r_{ii}|},
\qquad
\mu_i(\vecb{x}_{>i}):=
\frac{c'_i-\sum_{j>i}r_{ij}x_j}{r_{ii}}.
\end{equation}
Then the Klein probability distribution is
\begin{equation}
Q(\vecb{x})
:=
\prod_{i=1}^n
\frac{\rho_{s_i,\mu_i(\vecb{x}_{>i})}(x_i)}
{\rho_{s_i,\mu_i(\vecb{x}_{>i})}(\Z)}
=
\frac{\rho_{s,\vecb{c}}(\mat{B}\vecb{x})}
{\prod_{i=1}^n\rho_{s_i,\mu_i(\vecb{x}_{>i})}(\Z)},
\end{equation}
where \(\rho_{s,\vecb{c}}(\mat{B}\vecb{x})=\prod_{i=1}^n\rho_{s_i,\mu_i(\vecb{x}_{>i})}(x_i)\).
\end{definition}

The product form in Definition~\ref{def:klein-probability} is exactly the distribution generated by Klein's Algorithm (see Algorithm~\ref{alg:klein}), in the form later used by Gentry et al.~\cite{GenPeiVai2008} with a short basis to sample from lattice Gaussian distributions. Given \(\mat{B}=\mat{Q}\mat{R}\) and
\(\vecb{c}'=\mat{Q}^{\top}\vecb{c}\), the sampler generates the coefficient vector \(\vecb{x}\in\mathbb Z^n\) sequentially in reverse order from \(x_n\) down to \(x_1\). At each step \(i\), after the higher-index coordinates \(x_{i+1},\ldots,x_n\) have been fixed, it samples $x_i \leftarrow \mathcal D_{\mathbb Z,s_i,\mu_i(\vecb{x}_{>i})}.$
Multiplying the conditional probabilities over \(i=n,\ldots,1\) gives precisely \(Q(\vecb{x})\). Thus Algorithm~\ref{alg:klein} samples $\vecb{x}$ from $Q$ in the coefficient space and outputs the corresponding lattice vector \(\mat{B}\vecb{x}\).

\begin{algorithm}
\caption{Klein's Algorithm~\cite{Klein2000}}
\label{alg:klein}
\KwIn{\(\mat{B}, s, \vecb{c}\)}
\KwOut{\(\mat{B}\vecb{x} \in \mathcal L\)}
Let \(\mat{B} = \mat{Q}\mat{R}\) and \(\vecb{c}' = \mat{Q}^{\top}\vecb{c}\)\;
\For{\(i = n,\ldots,1\)}{
    Let \(s_i = \frac{s}{|r_{ii}|}\) and
    \(\mu_i(\vecb{x}_{>i}) = \frac{c'_i-\sum_{j>i}r_{ij}x_j}{r_{ii}}\)\;
    Sample \(x_i\) from \(\mathcal{D}_{\Z,s_i,\mu_i(\vecb{x}_{>i})}\)\;
}
\KwRet{\(\mat{B}\vecb{x}\)}
\end{algorithm}

\begin{definition}[Total Variation Distance]
Let \(\mu\) and \(\nu\) be two probability distributions on a countable state space \(\Omega\). The total variation distance between \(\mu\) and \(\nu\) is
\begin{equation}
	d_{\mathrm{TV}}(\mu,\nu) := \frac{1}{2} \sum_{x \in \Omega} |\mu(x) - \nu(x)|.
\end{equation}
\end{definition}

\begin{theorem}[{\cite[Theorem 1]{wang_lattice_2019}}]
\label{thm:imhk}
Let \(\mathcal{L}\subset \R^n\) be a lattice with basis \(\mat{B}\), let \(s>0\), and let \(\varepsilon>0\). There exists an algorithm that outputs a sample according to a distribution \(\mathcal D^{\varepsilon}_{\mathcal{L},s}\) satisfying \(d_{\mathrm{TV}}\left(\mathcal D^{\varepsilon}_{\mathcal{L},s},\mathcal D_{\mathcal{L},s}\right)\le \varepsilon\).
The algorithm runs in time
\begin{equation}
\ln\left(\frac{1}{\varepsilon}\right)\cdot \frac{1}{\Delta}\cdot \operatorname{poly}(n),
\end{equation}
where
\begin{equation}
\frac{1}{\Delta}
=
\frac{
\prod_{i=1}^{n}\rho_{s/\norm{\widehat{\vecb{b}}_i}}(\Z)
}{
\rho_s(\mathcal{L})
}, \label{eq:Delta}
\end{equation}
and \(\widehat{\vecb{b}}_1,\ldots,\widehat{\vecb{b}}_n\) are the Gram--Schmidt vectors of \(\mat{B}\).
\end{theorem}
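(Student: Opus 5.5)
The plan is to prove the statement with the \emph{independent Metropolis--Hastings--Klein} (IMHK) chain. The state space is $\Z^n$ (equivalently, $\mathcal L$), the target is $\pi(\vecb{x})=\rho_{s,\vecb c}(\mat B\vecb{x})/\rho_{s,\vecb c}(\mathcal L)$, and the proposal at every step is an independent Klein sample $\vecb{y}\sim Q$ from Definition~\ref{def:klein-probability}. From the current state $\vecb{x}$, a proposal $\vecb{y}$ is accepted with probability
\[
\alpha(\vecb{x},\vecb{y})=\min\left\{1,\frac{\pi(\vecb{y})Q(\vecb{x})}{\pi(\vecb{x})Q(\vecb{y})}\right\}.
\]
By the product formula, this ratio only involves the normalizers $\prod_i\rho_{s_i,\mu_i}(\Z)$, so it is computable. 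Each step therefore costs one run of Algorithm~\ref{alg:klein} plus this ratio, i.e.\ $\operatorname{poly}(n)$. The chain is reversible with respect to $\pi$ by construction, so the output distribution converges to $\mathcal D_{\mathcal L,s}$ and only the convergence rate needs to be controlled.

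The key step is a pointwise domination $Q(\vecb{x})\ge\Delta\,\pi(\vecb{x})$ for all $\vecb{x}$. Starting from the second expression for $Q$ in Definition~\ref{def:klein-probability}, it suffices to show $\rho_{s_i,\mu}(\Z)\le\rho_{s_i}(\Z)$ for every shift $\mu\in\R$. I would prove this by Poisson summation:
\[
\rho_{s,\mu}(\Z)=s\sum_{k\in\Z}e^{-\pi s^2k^2}\cos(2\pi k\mu)\le s\sum_{k\in\Z}e^{-\pi s^2k^2}=\rho_s(\Z),
\]
where the inequality uses that every Fourier coefficient is nonnegative. Then, using $s_i=s/|r_{ii}|=s/\norm{\widehat{\vecb b}_i}$,
\[
Q(\vecb{x})\ge\frac{\rho_{s,\vecb c}(\mat B\vecb{x})}{\prod_i\rho_{s_i}(\Z)}=\frac{\rho_{s,\vecb c}(\mathcal L)}{\prod_i\rho_{s_i}(\Z)}\,\pi(\vecb{x})=\Delta\,\pi(\vecb{x}).
\]
This matches \eqref{eq:Delta} (with $\vecb c=\vecb 0$ as stated). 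I expect this step to be the only real obstacle, and it is resolved by the nonnegativity of the Fourier coefficients.

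From the domination, a Doeblin minorization follows. For $\vecb{y}\neq\vecb{x}$,
\[
P(\vecb{x},\vecb{y})=Q(\vecb{y})\,\alpha(\vecb{x},\vecb{y})=\min\left\{Q(\vecb{y}),\ \pi(\vecb{y})\frac{Q(\vecb{x})}{\pi(\vecb{x})}\right\}\ge\Delta\,\pi(\vecb{y}),
\]
using $Q(\vecb{y})\ge\Delta\pi(\vecb{y})$ in the first argument and $Q(\vecb{x})/\pi(\vecb{x})\ge\Delta$ in the second. For $\vecb{y}=\vecb{x}$, the holding probability only adds to this. Hence $P(\vecb{x},\cdot)\ge\Delta\,\pi(\cdot)$ uniformly in $\vecb{x}$. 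The standard coupling argument then gives $d_{\mathrm{TV}}(P^t(\vecb{x},\cdot),\pi)\le(1-\Delta)^t\le e^{-\Delta t}$ from any starting state. Choosing $t=\lceil\ln(1/\varepsilon)/\Delta\rceil$ yields error at most $\varepsilon$. The total running time is $t\cdot\operatorname{poly}(n)=\ln(1/\varepsilon)\cdot\Delta^{-1}\cdot\operatorname{poly}(n)$, as claimed.
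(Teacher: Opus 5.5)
Your proposal is correct and follows the route of the cited Wang--Ling result, which the paper quotes rather than reproves: the independent Metropolis--Hastings--Klein chain, the pointwise domination $Q(\vecb{x})\ge\Delta\,\pi(\vecb{x})$ obtained from $\rho_{s_i,\mu}(\Z)\le\rho_{s_i}(\Z)$ (the untruncated form of the paper's Proposition~\ref{prop:klein-domination-truncated-target}), and the resulting uniform ergodicity $(1-\Delta)^t$, which gives mixing time $\ln(1/\varepsilon)/\Delta$. You also supply a Poisson-summation proof of the shift inequality, which the paper uses without proof, and spell out the Doeblin minorization $P(\vecb{x},\cdot)\ge\Delta\,\pi(\cdot)$ that the cited analysis relies on.
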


In \cite{wang_lattice_2019}, the MCMC algorithm performs lattice Gaussian sampling by embedding Klein's algorithm into a Metropolis--Hastings (MH) framework, where \(\mathcal D_{\mathcal L+\vecb{c},s}\) is set as the target distribution. Specifically, a candidate state is generated by Klein's algorithm, followed by the MH acceptance condition to either accept or reject it.

\subsection{Quantum Preliminaries}
\label{ssec:quantum-preliminaries}

For the quantum part, we require the following results:
\begin{definition}[Quantum Sample, \cite{AharonovTaShma2003}]
\label{def:qsample}
For a probability distribution $\nu$ on a finite or countable state space $\Omega$, its coherent quantum encoding is
\begin{equation}
	\ket{\nu}=\sum_{x\in\Omega}\sqrt{\nu(x)}\ket{x}.
\end{equation}
\end{definition}

Measuring \(|\nu\rangle\) in the computational basis returns \(x\in\Omega\) with probability \(\nu(x)\), so \(|\nu\rangle\) is the quantum analogue of a classical sample from \(\nu\).

\begin{theorem}[Amplitude amplification~\cite{Brassard2002}]
\label{thm:amplitude-amplification}
Let \(\mathcal A\) be a unitary such that
\begin{equation}
\label{eq:aa-decomposition}
\mathcal A\ket{0}
=
\sqrt{p}\ket{\psi}
+
\sqrt{1-p}\ket{\psi^\perp},
\end{equation}
where \(p\in(0,1]\), \(\ket{\psi}\in\operatorname{im}(\Pi)\) and \(\ket{\psi^\perp}\in\operatorname{im}(\mathbb I-\Pi)\) where $\Pi$ is the target subspace. Let \(S_0 := \mathbb I - 2\ket{0}\bra{0}\) be the reflection across input state $\ket{0}$, and \(S_{\Pi} := \mathbb I - 2\Pi\) be the reflection across the target subspace. Then amplitude amplification is defined by the following operations:
\begin{equation}
\label{eq:aa-iterate}
\mathcal U := -\mathcal A\, S_0\, \mathcal A^{\dagger}\, S_{\Pi}
\end{equation}
\end{theorem}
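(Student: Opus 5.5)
The statement as displayed only defines the iterate \(\mathcal U\). What has to be shown is the standard amplitude-amplification guarantee. Concretely, \(\mathcal U^k\mathcal A\ket{0}\) rotates towards \(\ket{\psi}\), and after \(k=\mathcal O(1/\sqrt p)\) iterations (each costing one use of \(\mathcal A\) and one of \(\mathcal A^\dagger\)) the state has overlap at least \(\max(p,1-p)\) with \(\operatorname{im}(\Pi)\). An exact variant reaches overlap \(1\). I would prove this by reducing everything to a two-dimensional real rotation, following Brassard et al.

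\textbf{Step 1: invariant plane.} Let \(\theta\in(0,\pi/2]\) be defined by \(\sin\theta=\sqrt p\), so that \(\mathcal A\ket{0}=\sin\theta\ket{\psi}+\cos\theta\ket{\psi^\perp}\). Set \(\mathcal H_2=\operatorname{span}\{\ket{\psi},\ket{\psi^\perp}\}\). Since \(\ket{\psi}\in\operatorname{im}(\Pi)\) and \(\ket{\psi^\perp}\in\operatorname{im}(\mathbb I-\Pi)\), on \(\mathcal H_2\) we have \(S_\Pi\ket{\psi}=-\ket{\psi}\) and \(S_\Pi\ket{\psi^\perp}=\ket{\psi^\perp}\). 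Hence \(S_\Pi\) restricted to \(\mathcal H_2\) is the reflection about the line through \(\ket{\psi^\perp}\).

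Next, \(-\mathcal A S_0\mathcal A^\dagger=2\mathcal A\ket{0}\bra{0}\mathcal A^\dagger-\mathbb I\). Since \(\mathcal A\ket{0}\in\mathcal H_2\), this operator preserves \(\mathcal H_2\) and acts there as the reflection about the line through \(\mathcal A\ket{0}\). It follows that \(\mathcal H_2\) is invariant under \(\mathcal U\).

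\textbf{Step 2: rotation.} The product of two reflections about lines meeting at angle \(\theta\) is a rotation by \(2\theta\). By induction on \(k\),
\[
\mathcal U^k\mathcal A\ket{0}=\sin\bigl((2k+1)\theta\bigr)\ket{\psi}+\cos\bigl((2k+1)\theta\bigr)\ket{\psi^\perp}.
\]
To verify this, apply the two reflections explicitly to \(\sin\phi\ket{\psi}+\cos\phi\ket{\psi^\perp}\) and use the angle-addition formulas.

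\textbf{Step 3: choice of \(k\) and cost.} Take \(k=\lfloor \pi/(4\theta)\rfloor\). Then \(|(2k+1)\theta-\pi/2|\le\theta\), so \(\|\Pi\,\mathcal U^k\mathcal A\ket{0}\|^2\ge\cos^2\theta=1-p\). If \(p\ge 1/2\), we use \(k=0\) instead. Because \(\theta\ge\sin\theta=\sqrt p\), we get \(k\le\pi/(4\sqrt p)\), which gives the \(\mathcal O(1/\sqrt p)\) query bound.

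The main obstacle is \emph{exactness}, which QRS needs in order to output \(\ket{\pi_{\mathcal R}}\) rather than merely something close to it. I would handle it as Brassard et al.\ do. When \(p\) is known (here \(p_{\mathcal R}\) is determined by the domination constant), shrink the effective angle to \(\theta'\le\theta\) so that \((2k+1)\theta'=\pi/2\) exactly. This is done either by first rotating an ancilla, in the spirit of the rotation \(R_a\), to lower the success amplitude, or by replacing the last iteration with phase-generalized reflections. When \(p\) is unknown, I would fall back on the exponentially growing randomized schedule, which keeps the expected cost at \(\mathcal O(1/\sqrt p)\).
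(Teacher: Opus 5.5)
Your argument is correct. The paper gives no proof of this statement: it imports it from Brassard et al., and as displayed it only defines the iterate \(\mathcal U\). Your derivation is precisely the standard proof from that reference: the invariant plane, the two reflections composing to a rotation by \(2\theta\) that gives \(\mathcal U^k\mathcal A\ket{0}=\sin((2k+1)\theta)\ket{\psi}+\cos((2k+1)\theta)\ket{\psi^\perp}\), the choice \(k=\lfloor\pi/(4\theta)\rfloor\), and the exact-\(p\) correction.

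The one divergence is the unknown-\(p\) case. There you fall back on Brassard et al.'s randomized exponential schedule, which is a measurement-based Las Vegas procedure. The paper's remark instead uses fixed-point amplitude amplification, which stays unitary and is therefore the version suited to using the prepared state coherently.
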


Ozols et al.~\cite{OzolsRoettelerRoland2013} introduced quantum rejection sampling as a quantum analogue of classical rejection sampling. Quantum rejection sampling uses amplitude amplification to boost the accepted branch during the procedure. Given a unitary oracle that prepares \(O_{\nu}:|0\rangle\mapsto|\nu\rangle\), if the target distribution \(\pi\) satisfies \(\nu(x)\ge p\pi(x)\) for all \(x\in\Omega\), where \(p>0\), and a coherent implementation of the acceptance probability \(p\pi(x)/\nu(x)\) is available, then QRS prepares the target quantum sample \(|\pi\rangle\) using quadratically fewer repetitions than classical rejection sampling. We use the following specialization.
\begin{lemma}[Lemma 4.1, \cite{OzolsRoettelerRoland2013}, restated]
\label{lemma:qrs}
Let \(\Omega\) be a finite state space and let \(\nu\) and \(\pi\) be probability distributions on \(\Omega\). Suppose that for some \(p\in(0,1]\), \(\nu(x)\ge p\pi(x)\) for every \(x\in\Omega\). Assume access to a unitary preparation oracle \(O_{\nu}:|0\rangle\mapsto|\nu\rangle=\sum_{x\in\Omega}\sqrt{\nu(x)}|x\rangle\), and to a controlled rotation implementing the acceptance probability
\begin{equation}
\label{eq:acceptance}
a(x):=p\frac{\pi(x)}{\nu(x)}.
\end{equation}
Then quantum rejection sampling prepares
\begin{equation}
|\pi\rangle=
\sum_{x\in\Omega}\sqrt{\pi(x)}|x\rangle
\end{equation}
using \(\mathcal O(1/\sqrt p)\) queries to \(O_{\nu}\), \(O_{\nu}^{\dagger}\), and the controlled-rotation oracle.
\end{lemma}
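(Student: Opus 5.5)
The statement to be proved is Lemma~\ref{lemma:qrs}, the specialization of quantum rejection sampling. The plan is to build one explicit state-preparation unitary whose "accept" branch carries exactly $\sqrt{p}\,|\pi\rangle$. We then invoke amplitude amplification (Theorem~\ref{thm:amplitude-amplification}).

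\textbf{Step 1: the preparation unitary.} Adjoin one ancilla qubit. Let $R_a$ be the controlled rotation
\[
|x\rangle|0\rangle \mapsto |x\rangle\bigl(\sqrt{a(x)}|1\rangle+\sqrt{1-a(x)}|0\rangle\bigr).
\]
This is well defined because the domination hypothesis $\nu(x)\ge p\pi(x)$ gives $a(x)=p\pi(x)/\nu(x)\in[0,1]$. If $\nu(x)=0$, domination forces $\pi(x)=0$, and we set $a(x)=0$.

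Define $\mathcal A:=R_a(O_\nu\otimes\mathbb I)$. A direct computation gives
\[
\mathcal A|0\rangle|0\rangle=\sum_x\sqrt{\nu(x)a(x)}\,|x\rangle|1\rangle+\sum_x\sqrt{\nu(x)(1-a(x))}\,|x\rangle|0\rangle .
\]
Since $\nu(x)a(x)=p\pi(x)$, the first term equals $\sqrt p\,|\pi\rangle|1\rangle$. Hence with $\Pi:=\mathbb I\otimes|1\rangle\langle1|$ we obtain the decomposition \eqref{eq:aa-decomposition} with success probability exactly $p$ and good state $|\psi\rangle=|\pi\rangle|1\rangle$.

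\textbf{Step 2: amplify.} Apply the iterate $\mathcal U=-\mathcal A S_0\mathcal A^\dagger S_\Pi$. The standard two-dimensional rotation analysis uses the angle $\theta$ with $\sin\theta=\sqrt p$. After $k$ iterations the state is
\[
\sin((2k+1)\theta)\,|\psi\rangle+\cos((2k+1)\theta)\,|\psi^\perp\rangle .
\]
Taking $k=\lfloor \pi/(4\theta)\rfloor=\mathcal O(1/\sqrt p)$ yields overlap $1-\mathcal O(p)$ with $|\pi\rangle|1\rangle$.

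Each iteration uses $\mathcal A$ and $\mathcal A^\dagger$ once. So the total count is $\mathcal O(1/\sqrt p)$ calls to $O_\nu$, $O_\nu^\dagger$ and $R_a^{(\dagger)}$. The reflections $S_0$ and $S_\Pi$ are free: $S_\Pi$ is a $Z$ gate on the ancilla.

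\textbf{Step 3: exactness.} The main obstacle is obtaining $|\pi\rangle$ \emph{exactly} rather than up to $\mathcal O(p)$ error. I would use the exact variant of amplitude amplification from~\cite{Brassard2002}, which is possible because $p$ is known. Replace $\theta$ by a slightly smaller angle $\theta'$ such that $(2k+1)\theta'=\pi/2$ for an integer $k$. This is done by first damping the accept amplitude with an extra known rotation on a second ancilla, or equivalently by adjusting the phases of the final iteration. Either way the iteration count stays $\mathcal O(1/\sqrt p)$. If only approximate preparation is needed, Step 2 already suffices.

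Finally, measure or discard the ancilla, which is now $|1\rangle$, leaving $|\pi\rangle$.
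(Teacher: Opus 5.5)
Your proposal is correct and is essentially the argument the paper relies on: the lemma itself is only cited from Ozols et al., but the paper's proof of Theorem~\ref{thm:qrs} uses exactly your construction. It forms $\mathcal A=R_a(O_\nu\otimes\mathbb I)$, identifies the accepted branch as $\sqrt{p}\,\ket{\pi}\ket{1}$, and amplifies with $\mathcal O(1/\sqrt{p})$ iterations. Your Step~3, which uses the exact known-$p$ variant of amplitude amplification from~\cite{Brassard2002}, also justifies the claim of \emph{exact} preparation, which the paper asserts without comment.
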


\begin{remark}
    A caveat is that standard amplitude amplification requires knowledge of the success probability \(p\) or the rotation angle, in order to choose the number of reflections without overshooting. If \(p\) is not known exactly but a certified lower bound \(p_{\min}\leq p\) is available, one can instead use fixed-point amplitude amplification~\cite{YoderLowChuang2014}, which uses
\[
\mathcal O\left(\frac{\log(1/\varepsilon)}{\sqrt{p_{\min}}}\right)
\]
iterations to prepare the target state up to error \(\varepsilon\). Thus the quadratic improvement over classical rejection sampling is retained with respect to the known lower bound \(p_{\min}\). This construction avoids overshooting by using phase sequences derived from Chebyshev polynomials and results in monotonic convergence toward the target subspace.
\end{remark}

\section{Quantum Algorithm for Lattice Gaussian Sampling}
\label{sec:qrs}

We now derive the lower bound required for preparing a lattice Gaussian from the Klein probability distribution. We use truncated versions of both distributions, since embedding them into a meaningful quantum state requires finite support.

\begin{proposition}
\label{prop:klein-domination-truncated-target}
Let \(\mathcal L=\mat{B}\Z^n\) and let \(\mathcal{X}_{\mathcal{R}}:=\{\vecb{x}\in\Z^n:\norm{\mat{B}\vecb{x}-\vecb{c}}\le \mathcal{R}\}\) be a coefficient-space truncation such that
\[
\pi_{\mathcal{R}}(\vecb{x})=\frac{\rho_{s,\vecb{c}}(\mat{B}\vecb{x})}{\rho_{s,\vecb{c}}(\mat{B}\mathcal{X}_{\mathcal{R}})},
\qquad
Q_{\mathcal{R}}(\vecb{x})=\frac{Q(\vecb{x})}{Q(\mathcal{X}_{\mathcal{R}})}
\]
where \(Q(\mathcal{X}_{\mathcal{R}}), \rho_{s,\vecb{c}}(\mat{B}\mathcal{X}_{\mathcal{R}})>0\), with \(Q(\mathcal{X}_{\mathcal{R}}):=\sum_{\vecb{x}\in\mathcal{X}_{\mathcal{R}}}Q(\vecb{x})\) and \(\rho_{s,\vecb{c}}(\mat{B}\mathcal{X}_{\mathcal{R}}):=\sum_{\vecb{x}\in\mathcal{X}_{\mathcal{R}}}\rho_{s,\vecb{c}}(\mat{B}\vecb{x})\). Let \(\pi_{\mathcal{R}}\) denote the truncated lattice Gaussian on \(\mathcal{X}_{\mathcal{R}}\), and let \(Q_{\mathcal{R}}\) denote the normalized restriction of the Klein probability distribution to \(\mathcal{X}_{\mathcal{R}}\); both distributions are zero outside \(\mathcal{X}_{\mathcal{R}}\). Then for every \(\vecb{x}\in\Z^n\),
\begin{equation}
\label{eq:klein-domination-truncated-target}
Q_{\mathcal R}(\vecb{x})
\geq
p_{\mathcal R}\pi_{\mathcal R}(\vecb{x}),
\qquad
p_{\mathcal R}:=
\dfrac{\Delta_{\mathcal R}}{Q(\mathcal X_{\mathcal R})}\in(0,1],
\end{equation}
where
\begin{equation}
\Delta_{\mathcal R}
=
\dfrac{\rho_{s,\vecb{c}}(\mat{B}\mathcal X_{\mathcal R})}
{\prod_{i=1}^{n}\rho_{s_i}(\Z)}.
\end{equation}
\end{proposition}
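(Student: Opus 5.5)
The plan is to reduce the claim to the classical pointwise bound of Wang and Ling relating Klein's distribution to the untruncated lattice Gaussian, and then renormalize on \(\mathcal X_{\mathcal R}\). Outside \(\mathcal X_{\mathcal R}\) both sides of \eqref{eq:klein-domination-truncated-target} vanish, so only \(\vecb x\in\mathcal X_{\mathcal R}\) needs work.

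\textbf{Step 1 (untruncated bound).} Using the product form in Definition~\ref{def:klein-probability}, the numerator of \(Q(\vecb x)\) is \(\rho_{s,\vecb c}(\mat B\vecb x)\). The denominator is \(\prod_i\rho_{s_i,\mu_i(\vecb x_{>i})}(\Z)\). I will invoke the standard fact that a one-dimensional Gaussian sum is maximized at zero shift:
\[
\rho_{s_i,\mu}(\Z)\le\rho_{s_i}(\Z)\quad\text{for all }\mu\in\R .
\]
This follows from Poisson summation, since \(\rho_{s_i,\mu}(\Z)=s_i\sum_{k\in\Z}e^{-\pi s_i^2k^2}\cos(2\pi k\mu)\) and every Fourier coefficient is nonnegative. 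It gives
\[
Q(\vecb x)\ge\frac{\rho_{s,\vecb c}(\mat B\vecb x)}{\prod_{i=1}^n\rho_{s_i}(\Z)}\qquad\text{for all }\vecb x\in\Z^n .
\]
This is exactly the lower bound underlying Theorem~\ref{thm:imhk}. The Poisson-summation fact is the only genuinely analytic ingredient, and it is the step I would expect to need care if it is not simply cited.

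\textbf{Step 2 (renormalize).} For \(\vecb x\in\mathcal X_{\mathcal R}\), substitute \(\rho_{s,\vecb c}(\mat B\vecb x)=\pi_{\mathcal R}(\vecb x)\,\rho_{s,\vecb c}(\mat B\mathcal X_{\mathcal R})\) into Step 1. This gives \(Q(\vecb x)\ge\Delta_{\mathcal R}\,\pi_{\mathcal R}(\vecb x)\). Dividing by \(Q(\mathcal X_{\mathcal R})>0\) yields \(Q_{\mathcal R}(\vecb x)\ge p_{\mathcal R}\,\pi_{\mathcal R}(\vecb x)\) with \(p_{\mathcal R}=\Delta_{\mathcal R}/Q(\mathcal X_{\mathcal R})\).

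\textbf{Step 3 (\(p_{\mathcal R}\in(0,1]\)).} Positivity holds because \(\rho_{s,\vecb c}(\mat B\mathcal X_{\mathcal R})>0\) and each \(\rho_{s_i}(\Z)\) is finite. For the upper bound, sum the inequality of Step 2 over \(\vecb x\in\mathcal X_{\mathcal R}\). Both \(Q_{\mathcal R}\) and \(\pi_{\mathcal R}\) are probability distributions on \(\mathcal X_{\mathcal R}\), so the sum reads \(1\ge p_{\mathcal R}\cdot 1\). Hence \(p_{\mathcal R}\le1\).
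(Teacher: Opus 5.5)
Your proposal is correct and follows essentially the same route as the paper: bound each shifted sum $\rho_{s_i,\mu_i(\vecb{x}_{>i})}(\Z)$ above by $\rho_{s_i}(\Z)$, renormalize on $\mathcal X_{\mathcal R}$, and sum the pointwise inequality over $\mathcal X_{\mathcal R}$ to get $p_{\mathcal R}\le 1$. The only addition is your Poisson-summation justification of the shift inequality, which the paper uses without proof.
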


\begin{proof}
Let \(\vecb{x}\in\Z^n\). If \(\vecb{x}\notin \mathcal X_{\mathcal R}\), then \(\pi_{\mathcal R}(\vecb{x})=Q_{\mathcal R}(\vecb{x})=0\), proving the claim. It remains to consider \(\vecb{x}\in\mathcal X_{\mathcal R}\). Using the definitions of \(Q_{\mathcal R}\), \(Q\), and \(\pi_{\mathcal R}\), we have
\begin{equation}
\begin{aligned}
\dfrac{Q_{\mathcal R}(\vecb{x})}{\pi_{\mathcal R}(\vecb{x})}
&=
\dfrac{1}{Q(\mathcal X_{\mathcal R})}
\dfrac{\rho_{s,\vecb c}(\mat B\vecb{x})}
{\prod_{i=1}^{n}\rho_{s_i,\mu_i(\vecb{x}_{>i})}(\Z)}
\dfrac{\rho_{s,\vecb c}(\mat B\mathcal X_{\mathcal R})}
{\rho_{s,\vecb c}(\mat B\vecb{x})} \\
&=
\dfrac{1}{Q(\mathcal X_{\mathcal R})}
\dfrac{\rho_{s,\vecb c}(\mat B\mathcal X_{\mathcal R})}
{\prod_{i=1}^{n}\rho_{s_i,\mu_i(\vecb{x}_{>i})}(\Z)}.
\end{aligned}
\end{equation}
By the inequality \(\rho_{s_i,\mu_i(\vecb{x}_{>i})}(\Z)\leq \rho_{s_i}(\Z)\), we get
\begin{equation}
\dfrac{Q_{\mathcal R}(\vecb{x})}{\pi_{\mathcal R}(\vecb{x})}
\geq
\dfrac{1}{Q(\mathcal X_{\mathcal R})}
\dfrac{\rho_{s,\vecb c}(\mat B\mathcal X_{\mathcal R})}
{\prod_{i=1}^{n}\rho_{s_i}(\Z)}
=
\dfrac{\Delta_{\mathcal R}}{Q(\mathcal X_{\mathcal R})}
=
p_{\mathcal R}.
\end{equation}
Hence
\begin{equation}
Q_{\mathcal R}(\vecb{x})
\geq
p_{\mathcal R}\pi_{\mathcal R}(\vecb{x}).
\end{equation}
It remains to show that \(p_{\mathcal R}\in(0,1]\). By the same bound, summing over \(\mathcal X_{\mathcal R}\) gives
\begin{equation}
Q(\mathcal X_{\mathcal R})
\geq
\dfrac{\rho_{s,\vecb c}(\mat B\mathcal X_{\mathcal R})}
{\prod_{i=1}^{n}\rho_{s_i}(\Z)}
=
\Delta_{\mathcal R}.
\end{equation}
Therefore, since \(Q(\mathcal X_{\mathcal R})>0\) and \(\Delta_{\mathcal R}>0\),
\begin{equation}
0<p_{\mathcal R}
=
\dfrac{\Delta_{\mathcal R}}{Q(\mathcal X_{\mathcal R})}
\leq 1.
\end{equation} \qed
\end{proof}

We now instantiate Lemma~\ref{lemma:qrs} with a truncated Klein oracle to prepare the quantum state that encodes the truncated lattice Gaussian as the target distribution.
\begin{definition}[Truncated Klein Oracle]
\label{def:klein-oracle}
Let \(O_{Q_{\mathcal{R}}}\) be a unitary oracle on state space $\mathcal{X}_{\mathcal{R}}$ satisfying
\begin{equation}
\label{eq:klein-oracle}
O_{Q_{\mathcal{R}}}:\ket{0}
\mapsto
\sum_{\vecb{x}\in \mathcal{X}_{\mathcal{R}}}
\sqrt{Q_{\mathcal{R}}(\vecb{x})}
\ket{\vecb{x}}.
\end{equation}
\end{definition}

\begin{theorem}[QRS preparation from the truncated Klein proposal]
\label{thm:qrs}
Let \(\mathcal L=\mat{B}\Z^n\), let \(\mathcal{X}_{\mathcal{R}}:=\{\vecb{x}\in\Z^n:\norm{\mat{B}\vecb{x}-\vecb{c}}\le \mathcal{R}\}\), and let \(Q_{\mathcal{R}}\) and \(\pi_{\mathcal{R}}\) be normalized distributions on \(\mathcal{X}_{\mathcal{R}}\) as in Proposition~\ref{prop:klein-domination-truncated-target}. Let \(\Delta_{\mathcal{R}}\) be as in Proposition~\ref{prop:klein-domination-truncated-target} and assume oracle access \eqref{eq:klein-oracle} and controlled rotation gate \(R_a\) that implements the acceptance probability \(a(\vecb{x})\). Then Algorithm \ref{alg:qrs-klein-sampler} prepares the exact state $\ket{\pi_{\mathcal{R}}}$ using
\begin{equation}
\mathcal O\left(\frac{1}{\sqrt{p_{\mathcal R}}}\right)
\leq
\mathcal O\left(\frac{1}{\sqrt{\Delta_{\mathcal R}}}\right).
\end{equation}
queries to \(O_{Q_{\mathcal{R}}}\), \(O_{Q_{\mathcal{R}}}^{\dagger}\), controlled rotations \(R_a\), $R_a^{\dag}$ and the reflections \(S_0,S_\Pi\), and uses $\left\lceil\log_2 |\mathcal X_{\mathcal R}|\right\rceil+1$ number of qubits. 

\end{theorem}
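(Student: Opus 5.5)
The proof instantiates Lemma~\ref{lemma:qrs} with state space $\Omega=\mathcal X_{\mathcal R}$, proposal $\nu=Q_{\mathcal R}$, target $\pi=\pi_{\mathcal R}$ and constant $p=p_{\mathcal R}$. Proposition~\ref{prop:klein-domination-truncated-target} already gives the two hypotheses of the lemma: the pointwise domination $Q_{\mathcal R}(\vecb x)\ge p_{\mathcal R}\pi_{\mathcal R}(\vecb x)$ and $p_{\mathcal R}\in(0,1]$. Finiteness of $\Omega$ is immediate, since $\mathcal X_{\mathcal R}$ is the set of lattice points of a full-rank lattice in a bounded ball.

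With this choice, the acceptance function \eqref{eq:acceptance} simplifies neatly. From the computation in the proof of the proposition,
\[
a(\vecb x)=p_{\mathcal R}\frac{\pi_{\mathcal R}(\vecb x)}{Q_{\mathcal R}(\vecb x)}=\prod_{i=1}^n\frac{\rho_{s_i,\mu_i(\vecb x_{>i})}(\Z)}{\rho_{s_i}(\Z)}\in(0,1].
\]
So $R_a$ is well defined as the single-qubit rotation $\ket{\vecb x}\ket 0\mapsto\ket{\vecb x}(\sqrt{a(\vecb x)}\ket 1+\sqrt{1-a(\vecb x)}\ket 0)$, controlled on the register $\ket{\vecb x}$.

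Next I will write out Algorithm~\ref{alg:qrs-klein-sampler} explicitly. Let $\mathcal A:=R_a(O_{Q_{\mathcal R}}\otimes\mathbb I)$. A direct computation gives
\[
\mathcal A\ket 0\ket 0=\sqrt{p_{\mathcal R}}\,\ket{\pi_{\mathcal R}}\ket 1+\sqrt{1-p_{\mathcal R}}\,\ket{\psi^\perp}\ket 0 .
\]
This holds because the accepted branch has amplitudes $\sqrt{Q_{\mathcal R}(\vecb x)a(\vecb x)}=\sqrt{p_{\mathcal R}\pi_{\mathcal R}(\vecb x)}$, and these square-sum to $p_{\mathcal R}$. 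Taking $\Pi=\mathbb I\otimes\ket1\bra1$, this is exactly the decomposition \eqref{eq:aa-decomposition}. Theorem~\ref{thm:amplitude-amplification} then applies with the iterate $\mathcal U=-\mathcal A S_0\mathcal A^\dagger S_\Pi$. Each application of $\mathcal U$ uses a constant number of calls to $O_{Q_{\mathcal R}},O_{Q_{\mathcal R}}^\dagger,R_a,R_a^\dagger,S_0,S_\Pi$. About $\pi/(4\arcsin\sqrt{p_{\mathcal R}})=\mathcal O(1/\sqrt{p_{\mathcal R}})$ iterations rotate the state onto the good subspace. The flag qubit then reads $\ket1$ and can be discarded, leaving $\ket{\pi_{\mathcal R}}$.

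The main obstacle is the word \emph{exact}, since the number of iterations must be an integer. The first thing I will try is the standard exact variant of amplitude amplification: since $p_{\mathcal R}$ is known in closed form, either shrink the success probability slightly or modify the final iterate's phases so that the rotation lands exactly on the target. This changes the count by only a constant factor. If $p_{\mathcal R}$ is only lower-bounded, I will appeal to the fixed-point remark instead, at a $\log(1/\varepsilon)$ cost.

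Two bookkeeping steps remain. For the inequality $1/\sqrt{p_{\mathcal R}}\le 1/\sqrt{\Delta_{\mathcal R}}$, note that $p_{\mathcal R}=\Delta_{\mathcal R}/Q(\mathcal X_{\mathcal R})$ and $Q(\mathcal X_{\mathcal R})\le1$, so $p_{\mathcal R}\ge\Delta_{\mathcal R}$. For the qubit count, encoding $\mathcal X_{\mathcal R}$ needs $\lceil\log_2|\mathcal X_{\mathcal R}|\rceil$ qubits, and the acceptance flag needs one more. Ancillas internal to the oracles and to $R_a$ are not counted, since they are treated as black boxes.
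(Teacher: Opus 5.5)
Your proposal is correct and follows essentially the same route as the paper: instantiate the QRS lemma with $\nu=Q_{\mathcal R}$, $\pi=\pi_{\mathcal R}$ and $p=p_{\mathcal R}$, reduce the acceptance function to $a(\vecb x)=\prod_i\rho_{s_i,\mu_i(\vecb x_{>i})}(\Z)/\rho_{s_i}(\Z)$, identify the accepted branch as $\sqrt{p_{\mathcal R}}\,\ket{\pi_{\mathcal R}}\ket{1}$, and handle the bound (via $Q(\mathcal X_{\mathcal R})\le 1$) and the register count exactly as the paper does. Your discussion of exactness goes beyond the paper's proof, which does not address the integer-iteration issue at all; note, however, that either exact-amplification fix (adjusted phases in the final iterate, or a rescaled rotation $R_{ca}$ with $c\le 1$) uses slightly modified versions of $S_0$, $S_\Pi$ or $R_a$ rather than literally the listed gates.
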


\begin{proof}
Begin with the oracle
\begin{equation}
O_{Q_{\mathcal{R}}}\ket{0}
=
\ket{Q_{\mathcal{R}}}
=
\sum_{\vecb{x}\in \mathcal{X}_{\mathcal{R}}}\sqrt{Q_{\mathcal{R}}(\vecb{x})}\ket{\vecb{x}}.
\end{equation}
Using Lemma~\ref{lemma:qrs}, for \(\vecb{x}\in \mathcal{X}_{\mathcal{R}}\), the acceptance probability can be obtained using \eqref{eq:acceptance}:

\begin{equation}
\begin{aligned}
a(\vecb{x})
&=
p_{\mathcal R}
\frac{\pi_{\mathcal{R}}(\vecb{x})}{Q_{\mathcal{R}}(\vecb{x})}
\\
&=
\frac{\Delta_{\mathcal{R}}}{Q(\mathcal{X}_{\mathcal{R}})}
\left(
\frac{\rho_{s,\vecb{c}}(\mat{B}\vecb{x})}
{\rho_{s,\vecb{c}}(\mat{B}\mathcal{X}_{\mathcal{R}})}
\right)
\left(
\frac{
Q(\mathcal{X}_{\mathcal{R}})\prod_{i=1}^n \rho_{s_i,\mu_i(\vecb{x}_{>i})}(\Z)
}{
\rho_{s,\vecb{c}}(\mat{B}\vecb{x})
}
\right)
\\
&=
\Delta_{\mathcal{R}}
\frac{
\prod_{i=1}^n \rho_{s_i,\mu_i(\vecb{x}_{>i})}(\Z)
}{
\rho_{s,\vecb{c}}(\mat{B}\mathcal{X}_{\mathcal{R}})
}
\\
&=
\prod_{i=1}^n
\frac{
\rho_{s_i,\mu_i(\vecb{x}_{>i})}(\Z)
}{
\rho_{s_i}(\Z)
},
\end{aligned}
\end{equation}
since \(\rho_{s_i,\mu_i(\vecb{x}_{>i})}(\Z)
\leq
\rho_{s_i}(\Z)\), each factor above lies in \([0,1]\). Hence \(0\leq a(\vecb{x})\leq 1\), so the corresponding controlled acceptance rotation is well-defined. Now the rotation is defined as
\begin{equation}
\label{eq:qrs-controlled-rotation}
R_a:\ket{\vecb{x}}\ket{0}
\mapsto
\ket{\vecb{x}}
\left(
\sqrt{1-a(\vecb{x})}\ket{0}
+
\sqrt{a(\vecb{x})}\ket{1}
\right),
\end{equation}
where, on the subspace spanned by \(\{\ket{\vecb{x}}:\vecb{x}\in \mathcal{X}_{\mathcal{R}}\}\), \(R_a\) is block-diagonal over \(\vecb{x}\) with
\begin{equation}
\label{eq:qrs-rotation-block}
R_a
=
\sum_{\vecb{x}\in \mathcal{X}_{\mathcal{R}}}
\ket{\vecb{x}}\bra{\vecb{x}}
\otimes
\begin{pmatrix}
\sqrt{1-a(\vecb{x})}
&
-\sqrt{a(\vecb{x})}
\\
\sqrt{a(\vecb{x})}
&
\sqrt{1-a(\vecb{x})}
\end{pmatrix}.
\end{equation}
Applying \(R_a\) to the proposal state gives
\begin{equation}
\label{eq:qrs-after-rotation}
\begin{aligned}
R_a\left(\ket{Q_{\mathcal{R}}}\ket{0}\right)
&=
\sum_{\vecb{x}\in \mathcal{X}_{\mathcal{R}}}
\sqrt{Q_{\mathcal{R}}(\vecb{x})}
\sqrt{1-a(\vecb{x})}
\ket{\vecb{x}}\ket{0}
+
\sum_{\vecb{x}\in \mathcal{X}_{\mathcal{R}}}
\sqrt{Q_{\mathcal{R}}(\vecb{x})}
\sqrt{a(\vecb{x})}
\ket{\vecb{x}}\ket{1}.
\end{aligned}
\end{equation}
The second term is the accepted branch:
\begin{equation}
\label{eq:qrs-accepted-branch}
\begin{aligned}
\sum_{\vecb{x}\in \mathcal{X}_{\mathcal{R}}}
\sqrt{Q_{\mathcal{R}}(\vecb{x})}
\sqrt{a(\vecb{x})}
\ket{\vecb{x}}\ket{1}
&=
\sum_{\vecb{x}\in \mathcal{X}_{\mathcal{R}}}
\sqrt{Q_{\mathcal{R}}(\vecb{x})}
\sqrt{
\frac{\Delta_{\mathcal R}}{Q(\mathcal X_{\mathcal R})}
\frac{\pi_{\mathcal R}(\vecb{x})}{Q_{\mathcal R}(\vecb{x})}
}
\ket{\vecb{x}}\ket{1}
\\
&=
\sqrt{\frac{\Delta_{\mathcal R}}{Q(\mathcal X_{\mathcal R})}}
\sum_{\vecb{x}\in \mathcal{X}_{\mathcal{R}}}
\sqrt{\pi_{\mathcal{R}}(\vecb{x})}
\ket{\vecb{x}}\ket{1} \\
&=
\sqrt{\frac{\Delta_{\mathcal R}}{Q(\mathcal X_{\mathcal R})}}
\ket{\pi_{\mathcal{R}}}\ket{1}.
\end{aligned}
\end{equation}
The normalized accepted state is \(\ket{\pi_{\mathcal{R}}}\ket{1}\), so discarding the accept qubit gives \(\ket{\pi_{\mathcal{R}}}\). Finally, taking the inverse gives \(\dfrac{1}{\sqrt{p_{\mathcal R}}}\leq\sqrt{\dfrac{Q(\mathcal X_{\mathcal R})}{\Delta_{\mathcal R}}}\leq\dfrac{1}{\sqrt{\Delta_{\mathcal R}}}\), since \(Q(\mathcal{X}_{\mathcal R})\leq 1\). Therefore the query complexity is
\begin{equation}
\label{eq:qrs-final-query-bound}
\mathcal O\left(
\frac{1}{\sqrt{\Delta_{\mathcal{R}}}}
\right).
\end{equation}
Given our oracle model, we need not consider the registers required to implement \(O_{Q_{\mathcal R}}\), \(R_a\), or their inverses. The QRS circuit only uses a coefficient register and one accept coin; hence the number of qubits is
\begin{equation}
\label{eq:qrs-visible-register-count}
N_{\mathrm{qubits}}=\left\lceil\log_2 |\mathcal X_{\mathcal R}|\right\rceil+1.
\end{equation}
\qed
\end{proof}

\begin{algorithm}[tb]
\caption{Quantum Rejection Sampler for Lattice Gaussians}
\label{alg:qrs-klein-sampler}
\KwIn{Basis \(\mat{B}\) of \(\mathcal L=\mat{B}\Z^n\), width \(s>0\), center \(\vecb{c}\), radius \(\mathcal R\), \(\mathcal{X}_{\mathcal R}\gets\{\vecb{x}\in\Z^n:\|\mat{B}\vecb{x}-\vecb{c}\|\le\mathcal R\}\), oracle \(O_{Q_{\mathcal R}}\), acceptance rotation \(R_a\), $p_{\mathcal R}$, register $\mathcal X$ and coin register initialized in $\ket{0}_c$}
\KwOut{A lattice vector \(\mat{B}\vecb{x}\).}
\(p_{\mathcal R}\gets\Delta_{\mathcal R}/Q(\mathcal{X}_{\mathcal R})\)\;
Prepare \(\ket{Q_{\mathcal R}}\) on register \(\mathcal X\) via \(O_{Q_{\mathcal R}}\)\;
Apply \(R_a\) on coin register to obtain the initial state \(\mathcal{A}_{\mathcal R}\ket{0}_{\mathcal X}\ket{0}_c\)\;
\For{$\mathcal O(1/\sqrt{p_{\mathcal R}})$ \textnormal{times}}{
  Apply $\mathcal U_{\mathcal R} = \mathcal A_{\mathcal R} S_0 \mathcal A_{\mathcal R}^{\dagger} S_{\Pi}$\;
}
Measure the coin register\;
\KwRet{$\mat{B}\vecb{x}$}
\end{algorithm}

\begin{figure}[tb]
\centering
\resizebox{0.8\textwidth}{!}{%
\begin{tikzpicture}[
  wire/.style={thick},
  gate/.style={draw, thick, fill=white, minimum height=0.7cm, minimum width=0.8cm, font=\small, inner sep=3pt},
  bluegate/.style={gate, fill=blue!15},
  meter/.style={draw, thick, fill=white, minimum height=0.6cm, minimum width=0.6cm}
]

\node[font=\small] (lblX) at (0,0) {$\ket{0}_{\mathcal{X}}$};
\node[font=\small] (lblC) at (0,-1.1) {$\ket{0}$};

\node[gate] (Oqr) at (1.5, 0) {$O_{Q_{\mathcal{R}}}$};
\node[gate, minimum height=1.8cm] (Ra) at (2.8, -0.55) {$R_a$};
\node[bluegate, minimum height=1.8cm] (U1) at (4.2, -0.55) {$\mathcal{U}_{\mathcal{R}}$};

\node[font=\large] (dotsT) at (5.3, 0) {$\cdots$};
\node[font=\large] (dotsC) at (5.3, -1.1) {$\cdots$};

\node[bluegate, minimum height=1.8cm] (U2) at (6.4, -0.55) {$\mathcal{U}_{\mathcal{R}}$};
\node[meter] (metT) at (7.6, 0) {};
\node[meter] (metC) at (7.6, -1.1) {};

\draw[wire] (lblX) -- (Oqr.west);
\draw[wire] (Oqr.east) -- ([yshift=0.55cm]Ra.west);
\draw[wire] (lblC) -- ([yshift=-0.55cm]Ra.west);

\draw[wire] ([yshift=0.55cm]Ra.east) -- ([yshift=0.55cm]U1.west);
\draw[wire] ([yshift=-0.55cm]Ra.east) -- ([yshift=-0.55cm]U1.west);

\draw[wire] ([yshift=0.55cm]U1.east) -- (4.8, 0);
\draw[wire] ([yshift=-0.55cm]U1.east) -- (4.8, -1.1);

\draw[wire] (5.8, 0) -- ([yshift=0.55cm]U2.west);
\draw[wire] (5.8, -1.1) -- ([yshift=-0.55cm]U2.west);

\draw[wire] ([yshift=0.55cm]U2.east) -- (metT.west);
\draw[wire] ([yshift=-0.55cm]U2.east) -- (metC.west);

\draw[wire] (metT.east) -- ++(0.5,0);
\draw[wire] (metC.east) -- ++(0.5,0);

\draw[thick] (0.6, 0.15) -- (0.8, -0.15);

\draw[thick] (metT.center) ++(-0.15,-0.1) arc (180:0:0.15);
\draw[thick,->] (metT.center) ++(-0.05,-0.1) -- ++(0.22,0.22);
\draw[thick] (metC.center) ++(-0.15,-0.1) arc (180:0:0.15);
\draw[thick,->] (metC.center) ++(-0.05,-0.1) -- ++(0.22,0.22);

\draw[decorate, decoration={brace, amplitude=6pt, raise=4pt}]
  ([yshift=0.05cm]U1.north west) -- ([yshift=0.05cm]U2.north east)
  node[midway, above=10pt, font=\small] {$\mathcal O\!\left(\dfrac{1}{\sqrt{p_{\mathcal{R}}}}\right)$};

\begin{scope}[shift={(4.2, -4.5)}]
  \node[gate, minimum height=1.8cm] (bSPi) at (-3.0, 0) {$S_{\Pi}$};
  \node[gate, minimum height=1.8cm] (bRad) at (-1.7, 0) {$R_a^{\dagger}$};
  \node[gate] (bOqrd) at (-0.4, 0.55) {$O_{Q_{\mathcal{R}}}^{\dagger}$};
  \node[gate, minimum height=1.8cm] (bS0) at (0.9, 0) {$S_{0}$};
  \node[gate] (bOqr) at (2.2, 0.55) {$O_{Q_{\mathcal{R}}}$};
  \node[gate, minimum height=1.8cm] (bRa) at (3.5, 0) {$R_a$};

  \draw[wire] (-3.7, 0.55) -- ([yshift=0.55cm]bSPi.west);
  \draw[wire] (-3.7, -0.55) -- ([yshift=-0.55cm]bSPi.west);

  \draw[wire] ([yshift=0.55cm]bSPi.east) -- ([yshift=0.55cm]bRad.west);
  \draw[wire] ([yshift=-0.55cm]bSPi.east) -- ([yshift=-0.55cm]bRad.west);

  \draw[wire] ([yshift=0.55cm]bRad.east) -- (bOqrd.west);
  \draw[wire] ([yshift=-0.55cm]bRad.east) -- ([yshift=-0.55cm]bS0.west);
  \draw[wire] (bOqrd.east) -- ([yshift=0.55cm]bS0.west);

  \draw[wire] ([yshift=0.55cm]bS0.east) -- (bOqr.west);
  \draw[wire] ([yshift=-0.55cm]bS0.east) -- ([yshift=-0.55cm]bRa.west);
  \draw[wire] (bOqr.east) -- ([yshift=0.55cm]bRa.west);

  \draw[wire] ([yshift=0.55cm]bRa.east) -- ++(0.5, 0);
  \draw[wire] ([yshift=-0.55cm]bRa.east) -- ++(0.5, 0);
\end{scope}

\begin{scope}[on background layer]
  \node[draw, dashed, thick, rounded corners, fill=blue!6, 
        inner xsep=12pt, inner ysep=14pt, 
        fit=(bSPi) (bRa) (bOqrd)] (box) {};
\end{scope}

\draw[dashed, gray!70, thick] (U1.south west) -- (box.north west);
\draw[dashed, gray!70, thick] (U1.south east) -- (box.north east);

\end{tikzpicture}%
}
\caption{Oracle-level decomposition of our quantum sampler. The top wire is the coefficient register \(X\), and the bottom wire is the coin register. The first two gates \(\mathcal A_{\mathcal R}=R_a(O_{Q_{\mathcal R}}\otimes\mathbb I_0)\) prepare the state for QRS. The operator \(\mathcal U_{\mathcal R}\) is one iterate of the amplitude amplification subroutine, whose dashed box gives its full decomposition. This is repeated \(\mathcal O\left(\frac{1}{\sqrt{p_{\mathcal R}}}\right)\) times in order to converge to the target state \(\ket{\pi_{\mathcal R}}\). Measuring coin register $c$ outputs a lattice vector \(\mat{B}\vecb{x}\) from the truncated lattice Gaussian distribution.\label{fig:qrs-klein}}
\end{figure}
Let \(\pi(\vecb{x}):=\mathcal D_{\mathcal L+\vecb c,s}(\mat B\vecb{x})\). We define the retained Gaussian mass as
\begin{equation}
\label{eq:alpha-r-definition}
\alpha_{\mathcal R}
:=
\sum_{\vecb{x}\in\mathcal X_{\mathcal R}}\pi(\vecb{x})
=
\frac{\rho_{s,\vecb c}(\mat B\mathcal X_{\mathcal R})}
{\rho_{s,\vecb c}(\mathcal L)}.
\end{equation}
For \(\vecb{x}\in\mathcal X_{\mathcal R}\), the truncated and full distributions are related by
\begin{equation}
\label{eq:pi-r-pi-alpha}
\pi_{\mathcal R}(\vecb{x})
=
\frac{\pi(\vecb{x})}{\alpha_{\mathcal R}}.
\end{equation}
Therefore,
\begin{equation}
\label{eq:pi-r-tvd}
\begin{aligned}
d_{\mathrm{TV}}(\pi,\pi_{\mathcal R})
&=
\frac{1}{2}
\sum_{\vecb{x}\in\mathcal X_{\mathcal R}}
\left|\pi(\vecb{x})-\pi_{\mathcal R}(\vecb{x})\right|
+
\frac{1}{2}
\sum_{\vecb{x}\notin\mathcal X_{\mathcal R}}
\pi(\vecb{x})\\
&=
\frac{1}{2}
\sum_{\vecb{x}\in\mathcal X_{\mathcal R}}
\pi(\vecb{x})
\left(\frac{1}{\alpha_{\mathcal R}}-1\right)
+
\frac{1}{2}
\sum_{\vecb{x}\notin\mathcal X_{\mathcal R}}
\pi(\vecb{x})\\
&=
\frac{1}{2}
\left(\frac{1}{\alpha_{\mathcal R}}-1\right)
\alpha_{\mathcal R}
+
\frac{1}{2}
(1-\alpha_{\mathcal R})\\
&=
1-\alpha_{\mathcal R}.
\end{aligned}
\end{equation}

Thus, choosing \(\mathcal R\) according to Corollary~\ref{cor:tail-bound} makes \(\alpha_{\mathcal R}\rightarrow 1\), so \(1-\alpha_{\mathcal R}\) becomes negligible and the truncated sampler is statistically close to the full lattice Gaussian. Hence, measuring this quantum state outputs \(\mat{B}\vecb{x}\) with \(\vecb{x}\leftarrow\pi_{\mathcal R}\). Additionally, if \(\alpha_{\mathcal R}\to1\), then \(1/\sqrt{\Delta_{\mathcal{R}}}\) approaches \(1/\sqrt{\Delta}\), giving the quadratic reduction in the \(\Delta\) term that dominates the MCMC sampling cost. Algorithm~\ref{alg:qrs-klein-sampler} contains details of our quantum sampler, and Figure~\ref{fig:qrs-klein} shows one instance of the QRS circuit.


\section{Quantum Speedup of Modern Dual Attack}
\label{sec:dual-complexity}

We first recall the modern dual attack of~\cite{PoulyShen2024}. Let \(q\) be a prime power and let \(\mat{A}\in\Zq^{m\times n}\), \(\vecb{s}\in\Zq^n\), \(\vecb{e}\in\Z^m\), an LWE instance is
\begin{equation}
\label{eq:lwe-instance}
\vecb{b}=\mat{A}\vecb{s}+\vecb{e} \pmod{q}.
\end{equation}
Fix a split \(n=n_{\mathrm{guess}}+n_{\mathrm{dual}}\). After permuting the columns of \(\mat{A}\) so that the first \(n_{\mathrm{guess}}\) columns correspond to the coordinates to be guessed, write
\begin{equation}
\label{eq:dual-attack-split}
\mat{A}=
\begin{pmatrix}
\mat{A}_{\mathrm{guess}} & \mat{A}_{\mathrm{dual}}
\end{pmatrix},
\qquad
\vecb{s}=
\begin{pmatrix}
\vecb{s}_{\mathrm{guess}}\\
\vecb{s}_{\mathrm{dual}}
\end{pmatrix},
\end{equation}
where \(\mat{A}_{\mathrm{guess}}\in\Zq^{m\times n_{\mathrm{guess}}}\),
\(\mat{A}_{\mathrm{dual}}\in\Zq^{m\times n_{\mathrm{dual}}}\),
\(\vecb{s}_{\mathrm{guess}}\in\Zq^{n_{\mathrm{guess}}}\), and
\(\vecb{s}_{\mathrm{dual}}\in\Zq^{n_{\mathrm{dual}}}\). Substituting into~\eqref{eq:lwe-instance} gives
\begin{equation}
\label{eq:split-lwe-instance}
\vecb{b}=\mat{A}_{\mathrm{guess}}\vecb{s}_{\mathrm{guess}}+\mat{A}_{\mathrm{dual}}\vecb{s}_{\mathrm{dual}}+\vecb{e} \pmod{q}.
\end{equation}

For any matrix \(\mat{M}\in\Zq^{m\times k}\), we use the associated \(q\)-ary lattice and its dual from Definition~\ref{def:q-ary-lattice}. The dual attack samples a list of short vectors \(\vecb{W}=(\vecb{w}_1,\dots,\vecb{w}_N)\) from the discrete Gaussian \(\mathcal D_{\mathcal{L}_q^\perp(\mat{A}_{\mathrm{dual}}),\,qs}\) on the dual lattice, where \(s\) is the Gaussian width parameter. Their sampling is performed in two stages: a basis of \(\mathcal{L}_q^\perp(\mat{A}_{\mathrm{dual}})\) is BKZ-reduced with block size \(\beta\), then the MCMC algorithm of \cite{wang_lattice_2019} is used to draw \(N\) samples. For each candidate guess \(\widetilde{\vecb{s}}_{\mathrm{guess}}\in\Zq^{n_{\mathrm{guess}}}\), as in the proof of~\cite[Theorem~6]{PoulyShen2024},
\begin{equation}
\vecb{b}-\mat{A}_{\mathrm{guess}}\widetilde{\vecb{s}}_{\mathrm{guess}}
=
\mat{A}_{\mathrm{dual}}\vecb{s}_{\mathrm{dual}}
+
\mat{A}_{\mathrm{guess}}(\vecb{s}_{\mathrm{guess}}-\widetilde{\vecb{s}}_{\mathrm{guess}})
+
\vecb{e}
\pmod q.
\end{equation}

Since \(\vecb{w}_j\in\mathcal L_q^\perp(\mat{A}_{\mathrm{dual}})\), the term involving \(\mat{A}_{\mathrm{dual}}\vecb{s}_{\mathrm{dual}}\) vanishes modulo \(q\) in the inner product
\(\inner{\vecb{w}_j}{\vecb{b}-\mat{A}_{\mathrm{guess}}\widetilde{\vecb{s}}_{\mathrm{guess}}}\).
The resulting cosine score is precisely the empirical score used in~\cite[Algorithm~2]{PoulyShen2024}, the geometric separation condition guaranteeing that the correct guess maximizes this score is therefore inherited from~\cite[Theorem~6]{PoulyShen2024}. We write this empirical score as
\begin{equation}
\label{eq:score-function}
g_{\vecb{W}}(\vecb{y})
=
\frac{1}{N}
\sum_{j=1}^{N}
\cos\left(
\frac{
2\pi
\left\langle
\vecb{w}_j,
\vecb{y}
\right\rangle
}{q}
\right)
\end{equation}
For each candidate guess, the score is
\begin{equation}
F(\widetilde{\vecb{s}}_{\mathrm{guess}})
=
N g_{\vecb{W}}\left(\mat{A}_{\mathrm{guess}}(\vecb{s}_{\mathrm{guess}}-\widetilde{\vecb{s}}_{\mathrm{guess}})+\vecb{e}\right),
\end{equation}
hence maximizing \(F\) is equivalent to maximizing \(g_{\vecb{W}}\).

\begin{lemma}[Lemma~9 of \cite{PoulyShen2024}]
\label{lem:pouly-shen-score-separation}
Let \(\mat{M}\in\Zq^{m\times k}\), let \(\mathcal L\subset\Z^m\) be a lattice, let \(\vecb{e}\in\Z^m\), let \(s,\delta>0\), and let \(N\in\N\). Let \(\tau=\frac{1}{s}\sqrt{m/(2\pi)}\) and \(\eta\geq0\), and assume that \(\lambda_1(\mathcal L+\mathcal L_q(\mat{M}))\geq\tau+\norm{\vecb{e}}\) and
\[
\rho_{1/s}(\vecb{e})
-
\rho_{1/s}\!\left(\lambda_1(\mathcal L+\mathcal L_q(\mat{M}))-\norm{\vecb{e}}-\tau\right)
>
2\delta+\eta.
\]
Then, with probability at least \(1-q^m\cdot2^{-\Omega(N\delta^2)}\) over the choice of \(\vecb{W}=(\vecb{w}_1,\ldots,\vecb{w}_N)\) from \(\mathcal D_{\mathcal L_q^\perp(\mat{M}),qs}^{N}\), we have
\[
g_{\vecb{W}}(\vecb{e})
\geq
\rho_{1/s}(\vecb{e})-\delta
>
\rho_{1/s}\!\left(\lambda_1(\mathcal L+\mathcal L_q(\mat{M}))-\norm{\vecb{e}}-\tau\right)+\delta+\eta
\geq
g_{\vecb{W}}(\vecb{e}+\vecb{x})+\eta
\]
for all \(\vecb{x}\in \mathcal L\setminus\mathcal L_q(\mat{M})\), where \(g_{\vecb{W}}\) is defined in \eqref{eq:score-function}.
\end{lemma}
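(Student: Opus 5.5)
The plan is to compare the empirical score \(g_{\vecb{W}}\) with its expectation, which has a closed form via Poisson summation, and then to bound that expectation from below at \(\vecb{e}\) and from above at every \(\vecb{e}+\vecb{x}\) with \(\vecb{x}\in\mathcal L\setminus\mathcal L_q(\mat{M})\). First, for fixed \(\vecb{y}\in\Z^m\), write \(f(\vecb{y}):=\mathbb E_{\vecb{w}\sim\mathcal D_{\mathcal L_q^\perp(\mat{M}),qs}}\cos(2\pi\inner{\vecb{w}}{\vecb{y}}/q)\). By \eqref{eq:q-ary-dual-relation}, \(\vecb{w}/q\) is distributed as \(\mathcal D_{\mathcal L_q(\mat{M})^*,s}\). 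The Poisson summation formula applied to the Gaussian on the dual lattice then gives
\[
f(\vecb{y})=\frac{\rho_{1/s}(\mathcal L_q(\mat{M})+\vecb{y})}{\rho_{1/s}(\mathcal L_q(\mat{M}))}.
\]
This quantity is real and nonnegative, and it depends only on the coset \(\vecb{y}+\mathcal L_q(\mat{M})\).

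Second, I would establish concentration. Each summand of \(g_{\vecb{W}}(\vecb{y})\) lies in \([-1,1]\), and the \(N\) samples are independent. Hoeffding's inequality therefore gives \(|g_{\vecb{W}}(\vecb{y})-f(\vecb{y})|\le\delta\) except with probability \(2^{-\Omega(N\delta^2)}\). Since \(\vecb{w}_j\in\Z^m\), the value \(g_{\vecb{W}}(\vecb{y})\) depends only on \(\vecb{y}\bmod q\). A union bound over the \(q^m\) residues then yields the stated success probability \(1-q^m2^{-\Omega(N\delta^2)}\), simultaneously for \(\vecb{e}\) and for all \(\vecb{e}+\vecb{x}\). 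On this event, two bounds on \(f\) finish the proof:
\[
f(\vecb{e})\ge\rho_{1/s}(\vecb{e})
\quad\text{and}\quad
f(\vecb{e}+\vecb{x})\le\rho_{1/s}\bigl(\lambda_1(\mathcal L+\mathcal L_q(\mat{M}))-\norm{\vecb{e}}-\tau\bigr).
\]
Combined with the hypothesis \(\rho_{1/s}(\vecb{e})-\rho_{1/s}(\cdot)>2\delta+\eta\), these give the chain of inequalities in the statement.

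The lower bound is the easy half. I would pair the lattice points \(\vecb{v}\) and \(-\vecb{v}\) of \(\mathcal L_q(\mat{M})\) and use
\[
\rho(\vecb{v}+\vecb{e})+\rho(-\vecb{v}+\vecb{e})=2\rho(\vecb{v})\rho(\vecb{e})\cosh(\cdot)\ge 2\rho(\vecb{v})\rho(\vecb{e}).
\]
Summing over all pairs gives \(\rho_{1/s}(\mathcal L_q(\mat{M})+\vecb{e})\ge\rho_{1/s}(\vecb{e})\rho_{1/s}(\mathcal L_q(\mat{M}))\), which is exactly \(f(\vecb{e})\ge\rho_{1/s}(\vecb{e})\).

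The upper bound is the main obstacle. For \(\vecb{x}\in\mathcal L\setminus\mathcal L_q(\mat{M})\), every point of \(\mathcal L_q(\mat{M})+\vecb{e}+\vecb{x}\) lies in the nonzero coset \(\vecb{x}+\vecb{e}+\mathcal L_q(\mat{M})\). Hence its norm is at least \(t:=\lambda_1(\mathcal L+\mathcal L_q(\mat{M}))-\norm{\vecb{e}}\), which is at least \(\tau\) by hypothesis. I therefore need a statement of the form: if \(\operatorname{dist}(\vecb{y},\Lambda)\ge t\ge\tau\), then \(\rho_{1/s}(\Lambda+\vecb{y})/\rho_{1/s}(\Lambda)\le\rho_{1/s}(t-\tau)\), where \(\tau=\frac1s\sqrt{m/(2\pi)}\) is the typical norm of a width-\(1/s\) Gaussian vector. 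I would first try a Banaszczyk-style argument. The idea is to view \(f(\vecb{y})\) as \(\mathbb E_{\vecb{u}\sim\mathcal D_{\Lambda,1/s}}[\rho_{1/s}(\vecb{y})e^{-2\pi s^2\inner{\vecb{u}}{\vecb{y}}}]\) after completing the square, or equivalently to relate \(\rho(\Lambda+\vecb{y})\) to \(\rho(\Lambda)\) by shifting by the nearest lattice point. One would then control the cross term using the fact that the Gaussian mass concentrates at radius \(\tau\). Concretely, I would split the coset into points of norm at most \(t\), of which there are none, and the remaining points, and bound \(\norm{\vecb{v}+\vecb{y}}\ge\norm{\vecb{y}'}\) with \(\norm{\vecb{y}'}\ge t\) against a typical lattice vector of norm at most \(\tau\). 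Since this is exactly the content proved in~\cite{PoulyShen2024}, I would follow their geometric lemma if the direct estimate loses constants.
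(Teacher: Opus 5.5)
The paper gives no proof of this lemma; it imports it from Pouly--Shen. Your framework is the natural one, and most of it is correct. The Poisson-summation identity for $f$ holds. So does Hoeffding with a union bound over the $q^m$ residues, since both $g_{\vecb W}$ and $f$ are $q\Z^m$-periodic because $q\Z^m\subseteq\mathcal L_q(\mat M)$. The pairing argument for $f(\vecb e)\ge\rho_{1/s}(\vecb e)$ and the final chaining with the hypothesis are also fine.

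The gap is the upper bound $f(\vecb e+\vecb x)\le\rho_{1/s}(t-\tau)$, where $t=\lambda_1(\mathcal L+\mathcal L_q(\mat M))-\norm{\vecb e}$. You correctly isolate it as the crux, but you do not prove it, and neither sketch closes it.
\begin{itemize}
\item Completing the square gives $f(\vecb y)=\rho_{1/s}(\vecb y)\,\mathbb E_{\vecb u\sim\mathcal D_{\Lambda,1/s}}[e^{-2\pi s^2\inner{\vecb u}{\vecb y}}]$. The trivial bounds on this exponential moment only recover $\rho_{1/s}(\vecb y)\le f(\vecb y)\le 1$. Bounding the moment by the required factor is the inequality restated, not proved.
\item Comparing coset points against ``a typical lattice vector of norm at most $\tau$'' presupposes a quantitative statement that the Gaussian mass beyond radius about $\tau$ is small. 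That is exactly the estimate you need.
\end{itemize}

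The missing ingredient is Banaszczyk's tail bound, which the paper already states as Corollary~\ref{cor:tail-bound}. Your observation that the coset $\Lambda+\vecb y$ (with $\Lambda=\mathcal L_q(\mat M)$ and $\vecb y=\vecb e+\vecb x$) has no point of norm below $t$ is the right reduction.
\begin{itemize}
\item For any $\tau<t'<t$, the entire coset mass lies outside $\mathcal B_m(t')$.
\item Apply the corollary with width $1/s$, radius $t'$ and $\varepsilon_{\mathrm{tail}}=\rho_{1/s}(t'-\tau)$. The radius condition then holds with equality, giving $\rho_{1/s}(\Lambda+\vecb y)\le\rho_{1/s}(t'-\tau)\,\rho_{1/s}(\Lambda)$.
\item Letting $t'\uparrow t$ yields the claim. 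The case $t=\tau$ is just $f\le 1$.
\end{itemize}

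If you want it self-contained, use Banaszczyk's scaling trick.
\begin{itemize}
\item For $c\ge1$ and $\norm{\vecb z}\ge t$, one has $\rho_{1/s}(\vecb z)\le e^{-\pi s^2t^2(1-c^{-2})}\rho_{c/s}(\vecb z)$.
\item By Poisson summation, $\rho_{c/s}(\Lambda+\vecb y)\le\rho_{c/s}(\Lambda)\le c^m\rho_{1/s}(\Lambda)$.
\item Choose $c=t/\tau$, which is at least $1$ exactly because of the hypothesis $\lambda_1(\mathcal L+\mathcal L_q(\mat M))\ge\tau+\norm{\vecb e}$.
\item Using $m=2\pi s^2\tau^2$ and $\log u\le u-1$ then shows $c^m e^{-\pi s^2(t^2-\tau^2)}\le\rho_{1/s}(t-\tau)$, which is the required bound.
\end{itemize}
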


\begin{algorithm}
\caption{QRS-accelerated modern dual attack}
\label{alg:modern-dual-attack}
\KwIn{\(m\), \(n_{\mathrm{guess}},n_{\mathrm{dual}}\), prime power \(q\), \(N\in\N\), and LWE instance \((\mat{A},\vecb{b})\)}
\KwOut{A guess for \(\vecb{s}_{\mathrm{guess}}\), or \(\bot\)}
\(\vecb{W}_{\mathcal R}=(\vecb{w}_1,\ldots,\vecb{w}_N)\gets N\) samples generated by Algorithm~\ref{alg:qrs-klein-sampler}\;
\(\vecb{s}_{\mathrm{guess}}\gets\bot\)\;
\(F_{\max}\gets0\)\;
\For{\(\widetilde{\vecb{s}}_{\mathrm{guess}}\in\Zq^{n_{\mathrm{guess}}}\)}{
\(y_j\gets\inner{\vecb{w}_j}{\vecb{b}-\mat{A}_{\mathrm{guess}}\widetilde{\vecb{s}}_{\mathrm{guess}}}\pmod q\), for \(j=1,\ldots,N\)\;
\(\displaystyle
F\gets\sum_{j=1}^{N}
\cos\left(
\frac{2\pi y_j}{q}
\right)\)\;
\If{\(F\geq F_{\max}\)}{
\(F_{\max}\gets F\)\;
\(\vecb{s}_{\mathrm{guess}}\gets\widetilde{\vecb{s}}_{\mathrm{guess}}\)\;
}
}
\KwRet{\(\vecb{s}_{\mathrm{guess}}\)}
\end{algorithm}
The analysis of~\cite{PoulyShen2024} introduces a parameter \(\delta>0\) that quantifies how the score separates the correct guess from incorrect ones: under a precondition relating \(\norm{\vecb{e}}\), \(\lambda_1(\mathcal L_q(\mat{A}))\), and \(\delta\) (see~\cite[Theorem~6]{PoulyShen2024} for the exact form), \cite[Algorithm~2]{PoulyShen2024} returns \(\vecb{s}_{\mathrm{guess}}\) with probability at least \(1-q^m\cdot2^{-\Omega(N\delta^2)}\) over the choice of \(\vecb{W}\). The total cost of the attack is
\begin{equation}
\label{eq:dual-attack-cost}
T_{\mathrm{dual}}
=
\operatorname{poly}(m,n)\left(N+q^{n_{\mathrm{guess}}}\right)
+
T_{\mathrm{BKZ}}(m,\beta)
+
N\cdot T_{\mathrm{MCMC}}\!\left(\mathcal L_q^\perp(\mat{A}_{\mathrm{dual}}),qs\right).
\end{equation}
Here \(T_{\mathrm{BKZ}}(m,\beta)\) is the cost of BKZ-\(\beta\) reduction in dimension \(m\), and \(T_{\mathrm{MCMC}}\) is the per-sample MCMC cost with mixing-time dependence as in Theorem~\ref{thm:imhk}. Our method bypasses MCMC by replacing it with QRS as given in Theorem~\ref{thm:qrs}, yielding the following improved complexity result:

\begin{theorem}
\label{thm:qrs-accelerated-dual-attack}
Assume the conditions of \cite[Theorem~6]{PoulyShen2024} hold, where
\(\mat{A}\in \Zq^{m\times n}\), \(\vecb{e}\in\Z^m\),
\(\vecb{s}\in\Zq^n\), \(s,\delta>0\), \(N\in\N\), and \(\varepsilon_{\mathrm{tail}}\in(0,1)\), with
\(\vecb{b}=\mat{A}\vecb{s}+\vecb{e} \bmod q\). Assume the split of
\(\mat{A}\) and \(\vecb{s}\) is as in \eqref{eq:dual-attack-split}. Let
\(\mat{B}\) be a BKZ-\(\beta\)-reduced basis of
\(\mathcal L_q^\perp(\mat{A}_{\mathrm{dual}})\), with Gram--Schmidt vectors
\(\widehat{\vecb{b}}_1,\ldots,\widehat{\vecb{b}}_m\). Set
\begin{equation}
\mathcal{R}\geq
qs\left(
\sqrt{\frac{m}{2\pi}}
+
\sqrt{\frac{\log(N/\varepsilon_{\mathrm{tail}})}{\pi}}
\right).
\end{equation}
Let \(\mathcal{S}_{\mathcal{R}}:=\mathcal L_q^\perp(\mat{A}_{\mathrm{dual}})\cap \mathcal B_m(\mathcal{R})\) and \(\mathcal{X}_{\mathcal{R}}:=\{\vecb{x}\in\Z^m:\mat{B}\vecb{x}\in\mathcal{S}_{\mathcal{R}}\}\). Let \(\vecb{W}_{\mathcal{R}}=(\vecb{w}_1,\ldots,\vecb{w}_N)\) be sampled from \(\mathcal D_{\mathcal{S}_{\mathcal{R}},qs}^{N}\) using
Theorem~\ref{thm:qrs}. Then, Algorithm \ref{alg:modern-dual-attack} on input $(m,n_{\mathrm{guess}},n_{\mathrm{dual}},q,N,(\mat{A},\vecb{b}))$ returns \(\vecb{s}_{\mathrm{guess}}\) with probability at least
\(1-q^m2^{-\Omega(N\delta^2)}-\varepsilon_{\mathrm{tail}}\)
and its end-to-end complexity under the QRS oracle model is
\begin{equation}
\label{eq:qrs-dual-end-to-end-complexity}
\operatorname{poly}(m,n)(N+q^{n_{\mathrm{guess}}})
+
T_{\mathrm{BKZ}}(m,\beta)
+
N\mathcal O\left(\frac{1}{\sqrt{\Delta_{\mathcal{R}}}}\right).
\end{equation}
\end{theorem}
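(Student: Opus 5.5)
The proof has two parts: a correctness bound obtained by coupling to the untruncated setting of \cite[Theorem~6]{PoulyShen2024}, and a cost bound obtained by replacing the MCMC term in \eqref{eq:dual-attack-cost} with Theorem~\ref{thm:qrs}.

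\textbf{Correctness.} We compare two lists. The first is \(\vecb{W}_{\mathcal R}\), drawn from \(\mathcal D_{\mathcal S_{\mathcal R},qs}^N\) by measuring the state prepared by Algorithm~\ref{alg:qrs-klein-sampler}. The second is \(\vecb{W}\), drawn from the full product \(\mathcal D_{\mathcal L_q^\perp(\mat{A}_{\mathrm{dual}}),qs}^N\).

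\begin{enumerate}
\item Apply Corollary~\ref{cor:tail-bound} to the lattice \(\mathcal L_q^\perp(\mat{A}_{\mathrm{dual}})\) with centre \(\vecb{0}\), width \(qs\), dimension \(m\), and tail parameter \(\varepsilon_{\mathrm{tail}}/N\). The stated lower bound on \(\mathcal R\) is exactly this instantiation, so \(1-\alpha_{\mathcal R}\le \varepsilon_{\mathrm{tail}}/N\).
\item By \eqref{eq:pi-r-tvd}, one truncated sample is within total variation distance \(1-\alpha_{\mathcal R}\) of one full sample.
\item By subadditivity of total variation distance over product measures (a union bound over the \(N\) independent draws), the two lists satisfy \(d_{\mathrm{TV}}(\mathcal D_{\mathcal S_{\mathcal R},qs}^N,\mathcal D_{\mathcal L_q^\perp,qs}^N)\le N(1-\alpha_{\mathcal R})\le\varepsilon_{\mathrm{tail}}\).
\item Let \(E\) be the set of lists \(\vecb{W}\) on which Algorithm~\ref{alg:modern-dual-attack} returns the correct \(\vecb{s}_{\mathrm{guess}}\). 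For a full list \(\vecb{W}\), \(E\) has probability at least \(1-q^m2^{-\Omega(N\delta^2)}\). This follows from Lemma~\ref{lem:pouly-shen-score-separation} together with the argument of \cite[Theorem~6]{PoulyShen2024}: the correct guess yields \(g_{\vecb W}(\vecb e)\), which strictly exceeds \(g_{\vecb W}(\vecb e+\vecb x)\) for every wrong guess. The loop of Algorithm~\ref{alg:modern-dual-attack} is identical to \cite[Algorithm~2]{PoulyShen2024}, so the argument transfers unchanged.
\item Since \(E\) is a fixed event, its probability changes by at most the total variation distance when \(\vecb{W}\) is replaced by \(\vecb{W}_{\mathcal R}\). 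This gives success probability at least \(1-q^m2^{-\Omega(N\delta^2)}-\varepsilon_{\mathrm{tail}}\).
\end{enumerate}

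\textbf{Cost.} We account term by term.
\begin{itemize}
\item \textbf{BKZ.} Reducing the basis \(\mat B\) costs \(T_{\mathrm{BKZ}}(m,\beta)\), exactly as before.
\item \textbf{Sampling.} Each sample comes from one run of Algorithm~\ref{alg:qrs-klein-sampler} with basis \(\mat B\), width \(qs\), centre \(\vecb 0\) and radius \(\mathcal R\), followed by a measurement. By Theorem~\ref{thm:qrs} and Proposition~\ref{prop:klein-domination-truncated-target}, each run uses \(\mathcal O(1/\sqrt{\Delta_{\mathcal R}})\) oracle calls. Over \(N\) samples this contributes \(N\,\mathcal O(1/\sqrt{\Delta_{\mathcal R}})\).
\item \textbf{Scoring.} Each of the \(q^{n_{\mathrm{guess}}}\) guesses costs \(\operatorname{poly}(m,n)\) per inner product. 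This gives the same \(\operatorname{poly}(m,n)(N+q^{n_{\mathrm{guess}}})\) term as in \eqref{eq:dual-attack-cost}, using the batching of \cite{PoulyShen2024}.
\end{itemize}
Summing the three terms gives \eqref{eq:qrs-dual-end-to-end-complexity}.

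\textbf{Main subtlety.} The delicate step is how exactly the QRS preparation is counted. Theorem~\ref{thm:qrs} prepares \(\ket{\pi_{\mathcal R}}\) exactly only if \(p_{\mathcal R}\) is known. If only a lower bound is available, fixed-point amplification introduces an extra error \(\varepsilon'\) and a factor \(\log(1/\varepsilon')\). I would take the oracle model's assumption of exact preparation, so there is no extra error. If that is not permitted, I would set \(\varepsilon'=\varepsilon_{\mathrm{tail}}/N\), fold the resulting error into the total variation budget of step 3 (at the cost of constant factors), and absorb the logarithmic factor into the \(\mathcal O(\cdot)\).
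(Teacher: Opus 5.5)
Your proposal is correct and follows the paper's proof essentially step for step. The steps are: a per-sample total variation distance of $1-\alpha_{\mathcal R}$; a union bound over the $N$ independent draws, with Corollary~\ref{cor:tail-bound} instantiated at width $qs$ and tail parameter $\varepsilon_{\mathrm{tail}}/N$ (the paper uses this only implicitly); transfer of the fixed success event between the two $\varepsilon_{\mathrm{tail}}$-close list distributions; and term-by-term cost accounting, with Theorem~\ref{thm:qrs} supplying the $N\,\mathcal O(1/\sqrt{\Delta_{\mathcal R}})$ sampling term. Your remark on exact versus fixed-point amplitude amplification covers a point the paper's proof leaves unaddressed, since it takes exact preparation of $\ket{\pi_{\mathcal R}}$ from Theorem~\ref{thm:qrs} for granted.
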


\begin{proof}
Under the dual-attack setup \eqref{eq:dual-attack-split}, let
\(\mathcal{L}_q^\perp:=\mathcal{L}_q^\perp(\mat{A}_{\mathrm{dual}})=\mat{B}\Z^m\). Since $\mathcal{X}_{\mathcal{R}}:=
\{\vecb{x}\in\Z^m:\mat{B}\vecb{x}\in\mathcal{S}_{\mathcal{R}}\},$
applying Theorem~\ref{thm:qrs} to \(\mathcal{X}_{\mathcal{R}}\), with \(\vecb{c}=\vecb{0}\) and Gaussian width \(qs\), gives a coefficient vector \(\vecb{x}\in\mathcal{X}_{\mathcal{R}}\) distributed as
\begin{equation}
\pi_{\mathcal{R}}(\vecb{x})
=
\frac{\rho_{qs}(\mat{B}\vecb{x})}
{\rho_{qs}(\mat{B}\mathcal{X}_{\mathcal{R}})}.
\end{equation}
Since \(\mat{B}\mathcal{X}_{\mathcal{R}}=\mathcal{S}_{\mathcal{R}}\), setting \(\vecb{w}:=\mat{B}\vecb{x}\) gives
\begin{equation}\label{eq:dual-lattice-gaussian}
\mathcal{D}_{\mathcal{S}_{\mathcal{R}},qs}(\vecb{w})
=
\frac{\rho_{qs}(\vecb{w})}
{\rho_{qs}(\mathcal{S}_{\mathcal{R}})},
\end{equation}
where \(\vecb{w}\in\mathcal{S}_{\mathcal{R}}\) is sampled from \eqref{eq:dual-lattice-gaussian}.  The query complexity for one sample is \(\mathcal O(\Delta_{\mathcal{R}}^{-1/2})\), where
\begin{equation}
	\label{eq:Delta_R}
	\Delta_{\mathcal{R}}
=
\frac{\rho_{qs}(\mathcal{S}_{\mathcal{R}})}
{\prod_{i=1}^{m}\rho_{qs/\norm{\widehat{\vecb{b}}_i}}(\Z)}.
\end{equation}
Therefore the list of \(N\) independent samples \(\vecb{W}_{\mathcal{R}}=(\vecb{w}_1,\ldots,\vecb{w}_N)\) from \(\mathcal D_{\mathcal{S}_{\mathcal{R}},qs}^{N}\)
is generated using \(N\cdot\mathcal O\left(\frac{1}{\sqrt{\Delta_{\mathcal{R}}}}\right)\) queries. Including the BKZ cost \(T_{\mathrm{BKZ}}(m,\beta)\), the complexity incurred from sampling is
\begin{equation}
	T_{\mathrm{samp}}=T_{\mathrm{BKZ}}(m,\beta)
+
N\mathcal O\left(\frac{1}{\sqrt{\Delta_{\mathcal{R}}}}\right).
\end{equation}
Define the retained full-Gaussian mass
\begin{equation}
\label{eq:alpha}
\begin{aligned}
\alpha_{\mathcal{R}}
&:=
\sum_{\vecb{w}\in\mathcal{S}_{\mathcal{R}}}
\mathcal D_{\mathcal L_q^\perp,qs}(\vecb{w}) \\
&=
\frac{\rho_{qs}(\mathcal{S}_{\mathcal{R}})}
{\rho_{qs}(\mathcal L_q^\perp)}.
\end{aligned}
\end{equation}
Then for \(\vecb{w}\in \mathcal{S}_{\mathcal{R}}\), one can relate the truncated and the full distribution as
\begin{equation}
\label{eq:truncated-distribution-relation}
\mathcal D_{\mathcal{S}_{\mathcal{R}},qs}(\vecb{w})
=
\frac{\mathcal D_{\mathcal L_q^\perp,qs}(\vecb{w})}{\alpha_{\mathcal{R}}},
\end{equation}
while for \(\vecb{w}\in\mathcal L_q^\perp\setminus \mathcal{S}_{\mathcal{R}}\) we get \(\mathcal D_{\mathcal{S}_{\mathcal{R}},qs}(\vecb{w})=0\). Thus, the total variation distance between \(\mathcal D_{\mathcal L_q^\perp,qs}\) and \(\mathcal D_{\mathcal{S}_{\mathcal{R}},qs}\) can be obtained by splitting the sum into \(\mathcal{S}_{\mathcal{R}}\) and \(\mathcal L_q^\perp\setminus \mathcal{S}_{\mathcal{R}}\):
\begin{align}
d_{\mathrm{TV}}(\mathcal D_{\mathcal L_q^\perp,qs},\mathcal D_{\mathcal{S}_{\mathcal{R}},qs})
&= \frac12 \sum_{\vecb{w}\in \mathcal{S}_{\mathcal{R}}}
\mathcal D_{\mathcal L_q^\perp,qs}(\vecb{w})
\left|1-\frac1{\alpha_{\mathcal{R}}}\right|
+\frac12 \sum_{\vecb{w}\in \mathcal L_q^\perp\setminus \mathcal{S}_{\mathcal{R}}}
\mathcal D_{\mathcal L_q^\perp,qs}(\vecb{w}) \notag\\
&= \frac12 \left(\frac1{\alpha_{\mathcal{R}}}-1\right)
\sum_{\vecb{w}\in \mathcal{S}_{\mathcal{R}}}
\mathcal D_{\mathcal L_q^\perp,qs}(\vecb{w})
+\frac12
\sum_{\vecb{w}\in \mathcal L_q^\perp\setminus \mathcal{S}_{\mathcal{R}}}
\mathcal D_{\mathcal L_q^\perp,qs}(\vecb{w}) \notag\\
&= \frac12 \left(\frac1{\alpha_{\mathcal{R}}}-1\right)
\frac{\rho_{qs}(\mathcal{S}_{\mathcal{R}})}
{\rho_{qs}(\mathcal L_q^\perp)}
+\frac12\left(
1-
\frac{\rho_{qs}(\mathcal{S}_{\mathcal{R}})}
{\rho_{qs}(\mathcal L_q^\perp)}
\right) \notag\\
&= \frac12\left(\frac1{\alpha_{\mathcal{R}}}-1\right)\alpha_{\mathcal{R}}
+\frac12(1-\alpha_{\mathcal{R}}) \notag\\
&= 1-\alpha_{\mathcal{R}} .
\end{align}
For \(N\) independent samples, the union bound gives
\begin{equation}
\label{eq:product-tvd-bound}
d_{\mathrm{TV}}(\mathcal D_{\mathcal L_q^\perp,qs}^{N},\mathcal D_{\mathcal{S}_{\mathcal{R}},qs}^{N})
\leq
N\cdot d_{\mathrm{TV}}(\mathcal D_{\mathcal L_q^\perp,qs},\mathcal D_{\mathcal{S}_{\mathcal{R}},qs})
=
N(1-\alpha_{\mathcal{R}}).
\end{equation}
Given our chosen value of \(\mathcal{R}\), we can bound
\begin{equation}
\label{eq:tail-mass-bound}
1-\alpha_{\mathcal{R}}
=
\frac{\rho_{qs}(\mathcal L_q^\perp\setminus \mathcal{S}_{\mathcal{R}})}
{\rho_{qs}(\mathcal L_q^\perp)}
\leq
\frac{\varepsilon_{\mathrm{tail}}}{N},
\end{equation}
hence \(d_{\mathrm{TV}}(\mathcal D_{\mathcal L_q^\perp,qs}^{N},\mathcal D_{\mathcal{S}_{\mathcal{R}},qs}^{N})
\leq
\varepsilon_{\mathrm{tail}}\). Given that \cite[Algorithm~2]{PoulyShen2024} returns \(\vecb{s}_{\mathrm{guess}}\) with probability at least \(1-q^m2^{-\Omega(N\delta^2)}\), define the event \(\mathcal E
:=
\left\{
\text{Algorithm \ref{alg:modern-dual-attack} returns } \vecb{s}_{\mathrm{guess}}
\right\}\). Since the probability of any event changes by at most the total variation distance between the input distributions, we get
\begin{equation}
\label{eq:success-probability-transfer}
\begin{aligned}
\Pr_{\vecb{W}_{\mathcal{R}}\sim\mathcal D_{\mathcal{S}_{\mathcal{R}},qs}^{N}}[\mathcal E]
&\geq
\Pr_{\vecb{W}\sim\mathcal D_{\mathcal L_q^\perp,qs}^{N}}[\mathcal E]
-
d_{\mathrm{TV}}(\mathcal D_{\mathcal L_q^\perp,qs}^{N},\mathcal D_{\mathcal{S}_{\mathcal{R}},qs}^{N})\\
&\geq
1-q^m2^{-\Omega(N\delta^2)}
-
d_{\mathrm{TV}}(\mathcal D_{\mathcal L_q^\perp,qs}^{N},\mathcal D_{\mathcal{S}_{\mathcal{R}},qs}^{N})\\
&\geq
1-q^m2^{-\Omega(N\delta^2)}
-
\varepsilon_{\mathrm{tail}}.
\end{aligned}
\end{equation}
Finally, since their algorithm contributes a complexity of
\(\operatorname{poly}(m,n)(N+q^{n_{\mathrm{guess}}})\), the total complexity of the dual attack amounts to
\begin{equation}
\label{eq:qrs-dual-total-complexity}
\operatorname{poly}(m,n)(N+q^{n_{\mathrm{guess}}})
+
T_{\mathrm{BKZ}}(m,\beta)
+
N\mathcal O\left(\frac{1}{\sqrt{\Delta_{\mathcal{R}}}}\right).
\end{equation}
\qed
\end{proof}

Following \cite{PoulyShen2024}, \(\Delta\) is defined as
\begin{equation}\label{eq:delta-PoulyShen2024}
\Delta :=
\frac{\rho_{qs}(\mathcal L_q^\perp(\mat{A}_{\mathrm{dual}}))}
{\prod_{i=1}^{m}\rho_{qs/\norm{\widehat{\vecb{b}}_i}}(\Z)}.
\end{equation}
By the definitions of \(\alpha_{\mathcal{R}}\) and \(\Delta_{\mathcal{R}}\) in \eqref{eq:alpha} and \eqref{eq:Delta_R}, respectively, we have the identity
\begin{equation}
\Delta_{\mathcal{R}}=\alpha_{\mathcal{R}}\Delta.
\end{equation}
Therefore the Gaussian mass outside of the truncation ball can be bounded above as
\begin{equation}
1-\alpha_{\mathcal{R}} \leq \frac{\varepsilon_{\mathrm{tail}}}{N}
\implies
\alpha_{\mathcal{R}} \geq 1-\frac{\varepsilon_{\mathrm{tail}}}{N},
\end{equation}
Hence
\begin{equation}
\frac{1}{\sqrt{\Delta_{\mathcal{R}}}}
\leq
\frac{1}{
\sqrt{\displaystyle
\left(1-\frac{\varepsilon_{\mathrm{tail}}}{N}\right)\Delta
}
}.
\end{equation}

Given that we choose the truncation radius according to Corollary~\ref{cor:tail-bound}, most of the Gaussian mass is covered. Since the Gaussian mass decays exponentially fast outside the chosen radius, we get \(\alpha_{\mathcal{R}} \rightarrow 1\), hence \(\displaystyle \frac{1}{\sqrt{\Delta_{\mathcal{R}}}}
\rightarrow
\frac{1}{\sqrt{\Delta}}\). Finally, under the above conditions, this gives a quadratic speedup in the sampling term of the dual attack framework:

\begin{equation}
\label{eq:dual-total-complexity}
T_{\mathrm{dual}}=\operatorname{poly}(m,n)(N+q^{n_{\mathrm{guess}}})
+
T_{\mathrm{BKZ}}(m,\beta)
+
N\mathcal O\left(\frac{1}{\sqrt{\Delta}}\right).
\end{equation}

We use the reduced complexity in \eqref{eq:qrs-dual-total-complexity} to estimate fixed-parameter costs for Kyber. \Cref{tab:qrs-vs-pouly-shen-no-ms} gives the no-modulus-switching comparison, using the same parameter choices \((m,n_{\mathrm{guess}},n_{\mathrm{dual}},\beta,s)\) as in~\cite{PoulyShen2024} and replacing only the Gaussian-sampling term. The estimated costs decrease from \(185\) to \(176\), from \(273\) to \(269\), and from \(376\) to \(363\) bits for Kyber512, Kyber768, and Kyber1024, respectively. Since \Cref{thm:qrs-accelerated-dual-attack} is stated for the no-modulus-switching Pouly--Shen framework, the modulus-switching figures in \Cref{tab:qrs-vs-pouly-shen-ms} should be read only as an indicative fixed-parameter comparison. In this case, the sampling improvement is largely absorbed by the remaining cost terms for Kyber512 and Kyber768, while Kyber1024 still drops from \(279\) to \(261\) bits.

\begin{table}[!htbp]
\centering
\begin{tabular}{c|cc|ccccc}
\multicolumn{8}{c}{\textbf{Without Modulus Switching}}\\[0.25em]
\hline
&
\multicolumn{2}{c|}{Attack Cost}
&
\multicolumn{5}{c}{Parameters}
\\
Scheme
& \cite{PoulyShen2024}
& This Work
& \(m\)
& \(n_{\mathrm{guess}}\)
& \(n_{\mathrm{dual}}\)
& \(\beta\)
& \(s\)
\\
\hline
Kyber512  & 185 & 176 & 1013 & 15 & 497 & 550  & 0.200\\
Kyber768  & 273 & 269 & 1469 & 23 & 745 & 870  & 0.260\\
Kyber1024 & 376 & 363 & 2025 & 31 & 993 & 1230 & 0.270\\
\hline
\end{tabular}
\vspace{0.6\baselineskip}
\caption{Comparison with the dual attack costs without modulus switching reported in~\cite{PoulyShen2024}. All costs are base-2 logarithms. The ``This Work'' column keeps the same parameters \((m,n_{\mathrm{guess}},n_{\mathrm{dual}},\beta,s)\) and substitutes only the QRS Gaussian-sampling term.}
\label{tab:qrs-vs-pouly-shen-no-ms}
\end{table}

Separately, since \Cref{thm:qrs} prepares a quantum state encoding a dual-lattice Gaussian, our quantum sampler can be combined with the quantum dual attack framework of Albrecht and Shen~\cite{AlbrechtShen2022}. In this composition, our QRS state-preparation oracle would replace the precomputed Gaussian sample list and then be used in the quantum mean estimation procedure of~\cite[Theorem~5]{AlbrechtShen2022} to estimate the score for each guess, enabling a fully quantum end-to-end dual attack that is QRACM-free. A more detailed treatment of this composition appears in the recent concurrent work of~\cite{cryptoeprint:2026/984}, to which we refer the reader for the full analysis.

\begin{table}[!htbp]
\centering
\begin{tabular}{c|cc|ccccc}
\multicolumn{8}{c}{\textbf{With Modulus Switching}}\\[0.25em]
\hline
&
\multicolumn{2}{c|}{Attack Cost}
&
\multicolumn{5}{c}{Parameters}
\\
Scheme
& \cite{PoulyShen2024}
& This Work
& \(m\)
& \(n_{\mathrm{guess}}\)
& \(n_{\mathrm{dual}}\)
& \(\beta\)
& \(s\)
\\
\hline
Kyber512  & 141 & 141 & 763  & 141 & 371 & 390 & 0.170\\
Kyber768  & 202 & 201 & 1169 & 201 & 567 & 610 & 0.240\\
Kyber1024 & 279 & 261 & 1575 & 261 & 763 & 890 & 0.260\\
\hline
\end{tabular}
\vspace{0.6\baselineskip}
\caption{Comparison with the dual attack costs with modulus switching reported in~\cite{PoulyShen2024}. All costs are base-2 logarithms. The ``This Work'' column keeps the same parameters \((m,n_{\mathrm{guess}},n_{\mathrm{dual}},\beta,s)\) and substitutes only the QRS Gaussian-sampling term.}
\label{tab:qrs-vs-pouly-shen-ms}
\end{table}

\section{Quantum Speedup of Trapdoor Sampling}
\label{sec:falcon-trapdoor-sampling}

We now describe how the same QRS idea can be used as
a replacement for the Gaussian sampling subroutine in GPV/FALCON-type
trapdoor sampling. This section is not meant to give a new implementation of
FALCON's signing algorithm, but rather it identifies the precise sampling step in
which the MCMC sampler in~\cite{wang_lattice_2019} can be
replaced by our QRS sampler.

\subsection{GPV Trapdoor Sampling}

In~\cite{GenPeiVai2008}, key generation produces two
bases of the same lattice: a hard public basis and a short private basis. The
public basis is used for verification, while the short private basis is the trapdoor used for signing. Given a message, the signer obtains a coset representative by hashing, denoted here by \(\vecb c\), and samples a short vector from a discrete Gaussian over the corresponding coset \(\mathcal D_{\mathcal L+\vecb c,s}\).
The signature is the resulting short lattice-coset vector, and the verifier checks both the coset relation and the norm bound using the public key. The purpose of the trapdoor sampler is therefore to output a vector whose distribution is close to the target lattice Gaussian while hiding the private
basis used to sample it.

FALCON~\cite{FouqueEtAlFalcon2020} instantiates this GPV paradigm over NTRU lattices. Following the notation in~\cite{wang_lattice_2019}, let \(R=\mathbb Z[x]/(x^d+1)\), where \(d\) is a power of two, and let the secret
polynomials \(f,g\in R\) satisfy that \(f\) is invertible. One finds
\(F,G\in R\) such that
\[
    fG-gF = q \bmod (x^d+1).
\]
The private NTRU basis is
\[
    \mat{B} =
    \begin{pmatrix}
        C(g) & -C(f)\\
        C(G) & -C(F)
    \end{pmatrix}^{T},
\]
where \(C(\cdot)\) denotes the negacyclic matrix associated with a polynomial. The corresponding public basis is
\[
    \mat{A} =\begin{pmatrix}
        -C(h) & I_d\\
        qI_d  & \mathbf{0}_d
    \end{pmatrix}^{T},
    \qquad h = g/f \bmod q.
\]
Both bases generate the same NTRU lattice
\(
    \mathcal{L} = \{(u,v)\in R^2 : u+vh=0 \bmod q\},
\)
whose dimension is \(n=2d\). In the signing step, the short basis \(\mat{B}\) is used
to sample from a coset Gaussian distribution associated with the message digest.

\subsection{Quantum-Accelerated GPV Sampling}
\label{subsec:quantum-accelerated-gpv}

We replace only the trapdoor Gaussian sampler in the GPV/FALCON signing
pipeline. The key generation, hash-to-coset map, public verification, and
trapdoor basis are unchanged; only the sampler for
\(\mathcal D_{\mathcal L+\vecb c,s}\) is replaced. Wang and Ling instantiate this step
with the independent MHK sampler~\cite{wang_lattice_2019}. Here we use the
truncated Klein proposal as the coherent proposal oracle for QRS.

Let \(\widehat{\vecb{b}}_1,\ldots,\widehat{\vecb{b}}_n\) be the Gram--Schmidt
vectors of the trapdoor basis \(\mat{B}\) and set $s_i=s/\|\widehat{\vecb{b}}_i\|$,
\(B_{\max}=\max_i\|\widehat{\vecb{b}}_i\|\). By Theorem~\ref{thm:imhk},
for the independent MHK sampler, the convergence
rate is controlled by a  \(\Delta\). The classical MCMC sampling cost is
given by
\begin{equation}\label{eq:Delta2}
T_{\mathrm{MCMC}}\asymp \frac{1}{\Delta}=\frac{\prod^n_{i=1}\rho_{s_i}(\mathbb{Z})}{\rho_{s,\vecb c}(\mathcal L)}.
\end{equation} 
This expression depends on the full Gram--Schmidt profile of the trapdoor
basis and on the coset center \(\vecb c\). Wang and Ling further derive a
basis-profile-independent upper bound above smoothing~\cite[Eq. 77]{wang_lattice_2019}: if
\(s\geq \sqrt{\gamma}B_{\max}\) for some \(\gamma\geq1\), then
\begin{equation}
    T_{\mathrm{MCMC}}
    \lesssim     \vartheta_3(\gamma)^n(1+2\varepsilon),\,\,\vartheta_3(\tau)=\sum_{k\in\mathbb Z}e^{-\pi\tau k^2}.
    \label{eq:falcon-wang-gamma-bound}
\end{equation}
In the numerical comparison below we use this latter upper bound, rather than \eqref{eq:Delta2}.

By Theorem~\ref{thm:qrs}, after truncating the coefficient space so that the
tail mass is negligible, QRS prepares the target coset Gaussian using
\(\mathcal O(1/\sqrt{\Delta_{\mathcal R}})\) oracle queries. Since
\(\Delta_{\mathcal R}\approx\Delta\) for such a truncation, the same bound gives
\begin{equation}
    T_{\mathrm{QRS}}
    \lesssim \vartheta_3(\gamma)^{n/2}\sqrt{1+2\varepsilon}.
    \label{eq:falcon-qrs-gamma-bound}
\end{equation}
Thus the QRS replacement turns the MCMC sampling cost into its square
root.

\begin{figure}[t]
    \centering
    \includegraphics[width=0.78\linewidth]{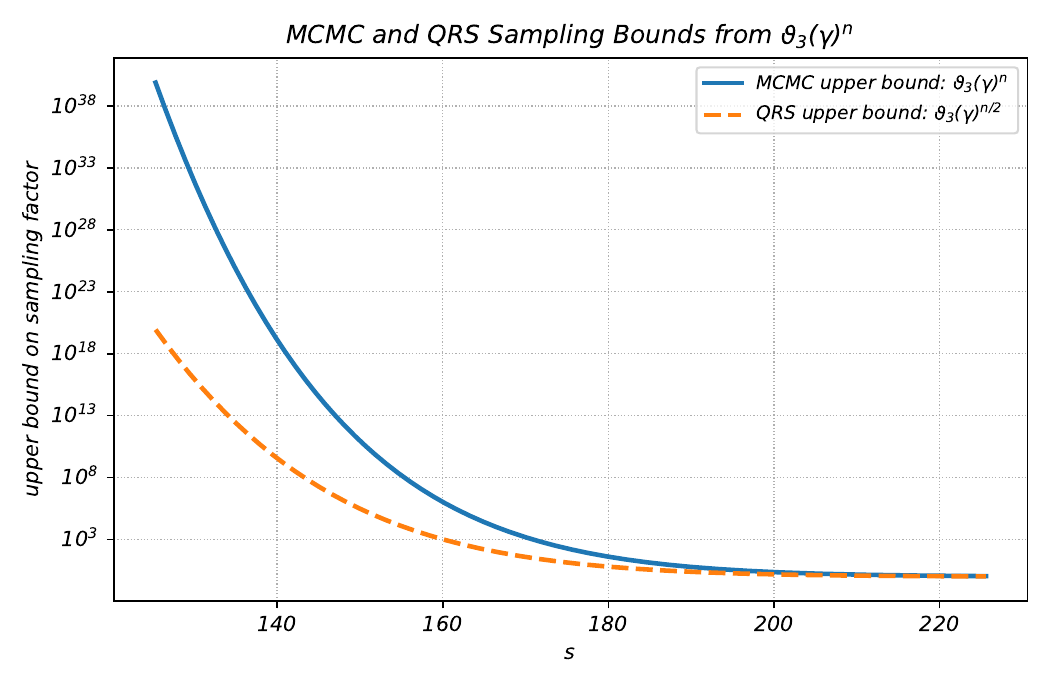}
    \caption{MCMC and QRS upper bounds derived from~\cite{wang_lattice_2019}'s
    \(\vartheta_3(\gamma)^n\) estimate for FALCON-512. The horizontal axis
    uses the Gaussian parameter \(s\), with \(\gamma=(s/B_{\max})^2\).}
    \label{fig:falcon-qrs-theta-bound}
\end{figure}

For FALCON-512, we take parameters \(d=512\), \(q=12289\), lattice dimension
\(n=2d=1024\), and report \(B_{\max}=127\) for one generated NTRU trapdoor
basis~\cite[Example~1]{wang_lattice_2019}. Taking \(\varepsilon\approx0\), the
corresponding profile-free upper bounds are shown in
\Cref{tab:falcon-gamma-bound-qrs}.

\begin{table}[t]
\centering
\small
\begin{tabular}{ccccc}
\toprule
\(\gamma\) & \(s/B_{\max}\) & \(\sigma/B_{\max}\) &
\(\vartheta_3(\gamma)^n\) &
\(\vartheta_3(\gamma)^{n/2}\) \\
\midrule
\(\pi/2\) & 1.253 & 0.500 & \(2.24\times10^{6}\) & \(1.50\times10^{3}\) \\
2         & 1.414 & 0.564 & 45.49                  & 6.74 \\
3         & 1.732 & 0.691 & 1.18                   & 1.09 \\
\bottomrule
\end{tabular}
\caption{Upper-bound evaluation of the MCMC and QRS sampling factors for
FALCON-512. The fourth column upper-bounds \(T_{\mathrm{MCMC}}\), while the
fifth column upper-bounds \(T_{\mathrm{QRS}}\). We use \(n=1024\),
\(s=\sqrt{2\pi}\sigma\), \(B_{\max}=127\), and \(\varepsilon\approx0\).}
\label{tab:falcon-gamma-bound-qrs}
\end{table}

\Cref{fig:falcon-qrs-theta-bound} and \Cref{tab:falcon-gamma-bound-qrs} should
be interpreted as worst-case upper-bound comparisons depending only on
\(B_{\max}\). They do not reproduce the instance-dependent curve in~\cite[Fig.~4]{wang_lattice_2019}, which uses the full Gram--Schmidt profile~\eqref{eq:Delta2} of a sampled
FALCON key and can be much smaller than the profile-free bound in
\eqref{eq:falcon-wang-gamma-bound}.

\section{Conclusion}
\label{sec:conclusion}

In this work, we showed that the lower-bound relation behind Klein's proposal
distribution can be used as the domination condition required for quantum
rejection sampling. After truncating the coefficient space, QRS prepares the
truncated lattice Gaussian with query complexity
\[
    \mathcal O\!\left(\frac{1}{\sqrt{\Delta_{\mathcal R}}}\right),
\]
while the truncation error can be made negligible by choosing a sufficiently
large radius. We applied this sampler to the modern dual attack framework of
Pouly and Shen~\cite{PoulyShen2024}. Replacing the classical IMHK sampler by QRS changes the Gaussian-sampling contribution from a \(1/\Delta\) dependence to a \(1/\sqrt{\Delta_{\mathcal R}}\) dependence, yielding lower fixed-parameter attack-cost estimates for Kyber. We also showed that the same idea applies to GPV/FALCON trapdoor sampling using the upper bound \(\vartheta_3(\gamma)^n\) from~\cite{wang_lattice_2019}, for which the corresponding QRS bound becomes \(\vartheta_3(\gamma)^{n/2}\).

\bibliographystyle{splncs04}
\bibliography{refs}

\appendix
\renewcommand{\theHsection}{appendix.\Alph{section}}

\end{document}